\theoremstyle{definition}
\newcommand{\bp}{\begin{proof} \small }
\newcommand{\ep}{\end{proof} \normalsize}
\newcommand{\epx}{\end{proof} \small}
\newcommand{\bpa}{\begin{proofappx} \footnotesize }
\newcommand{\epa}{\end{proofappx} \small }
\newtheorem{theorem}{Theorem}
\newtheorem{lemma}{Lemma}
\newtheorem{assumption}{Assumption}
\newtheorem{remark}{Remark}
\newtheorem*{theorem*}{Theorem}
\newtheorem*{proposition*}{Proposition}
\newtheorem*{corollary*}{Corollary}
\newtheorem*{lemma*}{Lemma}
\newtheorem*{assumption*}{Assumption}
\newtheorem*{definition*}{Definition}
\newtheorem*{claim*}{Claim}
\newtheorem*{remark*}{Remark}
\newcommand{\be}{\begin{equation}}
\newcommand{\ee}{\end{equation}}
\newcommand{\bs}{\begin{subequations}}
\newcommand{\es}{\end{subequations}}
\newcommand{\bq}{\begin{eqnarray}}
\newcommand{\eq}{\end{eqnarray}}
\newcommand{\bqn}{\begin{eqnarray*}}
\newcommand{\eqn}{\end{eqnarray*}}
\newcommand{\ba}{\left[ \begin{array}}
\newcommand{\ea}{\\ \end{array} \right]}
\newcommand{\ben}{\begin{enumerate}}
\newcommand{\een}{\end{enumerate}}
\def\a{{\boldsymbol{a}}}
\def\real{{\mathchoice%
{\hbox{\rm\setbox1=\hbox{I}\copy1\kern-.45\wd1 R}}
{\hbox{\rm\setbox1=\hbox{I}\copy1\kern-.45\wd1 R}}
{\hbox{\scriptsize\rm\setbox1=\hbox{I}\copy1\kern-.45\wd1 R}}
{\hbox{\scriptsize\rm\setbox1=\hbox{I}\copy1\kern-.45\wd1 R}}}}
\def\Zint{{\mathchoice{\setbox1=\hbox{\sf Z}\copy1\kern-.75\wd1\box1}
{\setbox1=\hbox{\sf Z}\copy1\kern-.75\wd1\box1}
{\setbox1=\hbox{\scriptsize\sf Z}\copy1\kern-.75\wd1\box1}
{\setbox1=\hbox{\scriptsize\sf Z}\copy1\kern-.75\wd1\box1}}}
\newcommand{\complex}{ \hbox{\rm C\kern-0.45em\rule[.07em]{.02em}{.58em}%
\kern 0.43em}}
\newcommand{\algmargin}{\the\ALG@thistlm}
\newlength{\whilewidth}
\algnewcommand{\parState}[1]{\State%
	\parbox[t]{\dimexpr\linewidth-\algmargin}{\strut #1\strut}}
\def\BibTeX{{\rm B\kern-.05em{\sc i\kern-.025em b}\kern-.08em
		T\kern-.1667em\lower.7ex\hbox{E}\kern-.125emX}}
\begin{document}	
		
% paper title
% can use linebreaks \\ within to get better formatting as desired
\title{Joint Client Assignment and UAV Route Planning for Indirect-Communication Federated Learning \footnote{An earlier version of this paper was presented at the 57th Annual Conference on Information Sciences and Systems (CISS) \cite{10089783}.}}

\author{
  Jieming Bian, ~Cong Shen, ~Jie Xu \footnote{J. Bian and J. Xu are with the Department of Electrical and Computer Engineering, University of Miami, FL, USA. Email: \{jxb1974, jiexu\}@miami.edu.;  C. Shen is with Charles L. Brown Department of Electrical and Computer Engineering, University of Virginia, VA, USA. Email: cong@virginia.edu.}
}

\maketitle

\begin{abstract}
Federated Learning (FL) is a machine learning approach that enables the creation of shared models for powerful applications while allowing data to remain on devices. This approach provides benefits such as improved data privacy, security, and reduced latency. However, in some systems, direct communication between clients and servers may not be possible, such as remote areas without proper communication infrastructure. To overcome this challenge, a new framework called FedEx (Federated Learning via Model Express Delivery) is proposed. This framework employs mobile transporters, such as UAVs, to establish indirect communication channels between the server and clients. These transporters act as intermediaries and allow for model information exchange. The use of indirect communication presents new challenges for convergence analysis and optimization, as the delay introduced by the transporters' movement creates issues for both global model dissemination and local model collection. To address this, two algorithms, FedEx-Sync and FedEx-Async, are proposed for synchronized and asynchronized learning at the transporter level. Additionally, a bi-level optimization algorithm is proposed to solve the joint client assignment and route planning problem. Experimental validation using two public datasets in a simulated network demonstrates consistent results with the theory, proving the efficacy of FedEx.

\end{abstract}

\section{Introduction}
Federated Learning (FL) is an innovative approach to machine learning that enables a large number of clients to collaborate on the development of a shared model without sharing their private data. The rise of edge devices like smartphones and IoT devices has led to an explosion in data creation, but concerns over privacy have made it challenging to upload this data to the cloud for analysis. FL addresses these concerns by allowing data to remain on the devices where it was created while still enabling the creation of shared models that can be used for powerful machine learning applications. By doing so, FL offers a range of benefits, including greater data privacy, improved data security, and reduced latency. FL is seen as a promising approach for a range of applications, including healthcare, agriculture, transportation, industrial IoT, and mobile applications, because it allows for distributed, privacy-preserving machine learning.

One key aspect that sets FL apart from other distributed learning frameworks is its focus on efficient communication between the clients and the server. Local stochastic gradient descent (SGD) \cite{stich2018local}  and FedAvg \cite{mcmahan2017communication} algorithms allow clients to perform multiple local SGD iterations on their own datasets before uploading the results to the parameter server for aggregation, resulting in more efficient communication compared to earlier distributed learning algorithms such as distributed SGD \cite{balcan2012distributed} , which requires local computation results to be uploaded to the server after every iteration. This emphasis on communication efficiency has inspired the development of many subsequent FL algorithms based on different optimization methods, and makes FL well-suited for use in communication-constrained learning environments where bandwidth is limited or communication patterns are sporadic.

Despite the differences in the adopted optimization methods and the focused settings, most existing FL works have implicitly assumed that clients can directly communicate with the server. Most synchronous FL algorithms often treat the communication frequency between the server and clients as a design parameter, assuming that the communication channel is universally available when needed. Asynchronous FL algorithms  \cite{avdiukhin2021federated, basu2019qsparse} may allow for less regular communication between the server and clients, but still assume direct communication. However, in many real-world systems, clients may not be able to directly communicate with the server at all due to the lack of a proper communication infrastructure. For example, sensing systems (for, e.g., natural resource management, earthquake/tsunami monitoring and forest fire tracking) are often deployed in remote areas with an extremely limited or even no communication infrastructure \cite{fadlullah2021smart}. In these systems, distributed smart sensors collect versatile and rich data over a large area, and spatial and temporal analysis is required to provide high-quality and reliable analysis for actionable intelligence. It is thus imperative to understand whether FL can still work without direct server-client communications and how to optimize the FL algorithms in these settings. 

In this paper, we present a new FL framework called FedEx (Federated Learning via Model Express Delivery) for use in situations where direct communication between the server and clients is not possible. To address this challenge, FedEx employs mobile transporters, such as Unmanned Aerial Vehicles (UAVs), to establish indirect communication channels between the server and clients for the exchange of model information. The mobile transporters act as intermediaries between the server and clients, similar to delivery trucks in a traditional package delivery system. An example of the use of FedEx in smart sensing in remote areas with no communication infrastructure is illustrated in Fig. \ref{sensing-app}.

However, the use of indirect communication in FedEx presents significant new challenges for convergence analysis and optimization. The delay introduced by the time it takes for mobile transporters to move from one location to another creates issues for both the global model dissemination phase and the local model collection phase. It's unclear if FedEx can still converge under this delay, and if so, how fast. Additionally, the transporter scheduling policy can result in either synchronized or asynchronized learning at the transporter level, which further complicates convergence analysis. Finally, Clients encounter different delays based on their location and the transporter routes chosen. This paper considers transporters with various moving speeds and energy constraints, making it nontrivial to evaluate the performance of FedEx concerning transporter route planning  and client assignments under energy constraints. The main contribution of this paper can be summarized as followings:
\begin{itemize}
    \item We propose the FL framework via indirect server-client communications. Two algorithms, coined FedEx-Sync and FedEx-Async, are proposed depending on whether the transporters synchronize their tours among the assigned clients. 
    \item Our proof demonstrates that both FedEx-Sync and FedEx-Async can attain a convergence rate of $O(\frac{1}{\sqrt{NT}})$, where $N$ represents the number of clients and $T$ represents the number of communication slots. This rate matches the convergence rate found in classical FL.
    \item The energy consumption of each transporter is formulated, taking into account the convergence bounds and energy constraints specific to each transporter. Subsequently, a bi-level optimization algorithm is proposed to address the problem of jointly assigning clients and planning routes. 
    % The specific forms of the convergence bounds for FedEx-Sync and FedEx-Async reveal how client assignment and route design affects the performance of FedEx. 
    % Based on the specific forms of the convergence bounds, a bi-level optimization algorithm is proposed to solve the joint client assignment and route design problem. 
    \item The experiments results using two public datasets validate the efficacy of FedEx and are consistent with our theory. 
\end{itemize}
The rest of the paper is organized as follows. Section II discusses related work. Section III formulates the problem. Section IV  proposes the FedEx framework and formulates the energy consumption. Section V provides convergence analysis. Section VI addresses the joint client assignment and route planning problem. Section VII provides experiment results. Finally, Section VIII concludes the paper.

\begin{figure}[tb]
	\centering
	\includegraphics[width=0.5\linewidth]{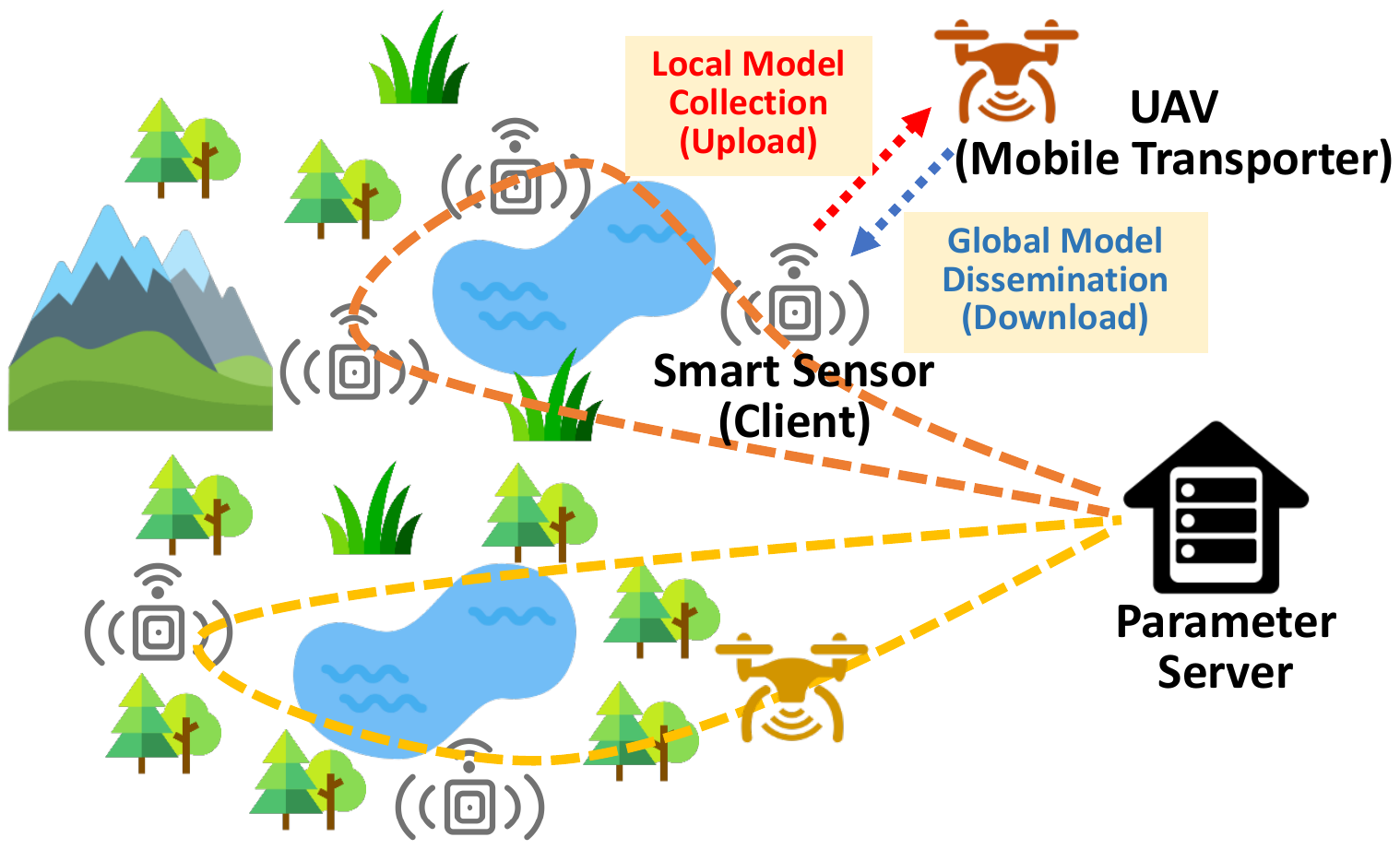}
	\caption{Illustration of FedEx applied to smart sensing in remote areas with no communication infrastructure.} \label{sensing-app}
	\vspace{-15 pt}
\end{figure}%

\section{Related Work}
\textbf{FL via Direct Communications}. Local SGD \cite{stich2018local} and FedAvg \cite{mcmahan2017communication} are among the first FL algorithms, which inspired many follow-up works \cite{li2019convergence, khaled2020tighter, yu2019parallel, wang2019slowmo, liu2020accelerating, karimireddy2020scaffold} based on various optimization techniques in this research area. 
%Convergence analysis has been conducted for FedAvg [XXX] and its variants (e.g., momentum variants [XXX] and variance-reducing variants) in both i.i.d. and non i.i.d. data settings. 
The majority of works on FL study the synchronous setting where clients synchronously upload their local model updates to the server. Asynchronous FL has also been studied in the literature \cite{avdiukhin2021federated, basu2019qsparse}, albeit much smaller. 
In this regard, existing works \cite{chen2020asynchronous, chai2021fedat} usually use asynchronous model aggregation to address the ``straggler'' problem encountered in the synchronous setting. Some others \cite{avdiukhin2021federated, yang2022anarchic} consider asynchronous FL under arbitrary client-server communication patterns.  
However, all these works made an implicit assumption that the model data exchange between the clients and the server is made through a direct communication channel, which may not even exist in scenarios without a communication infrastructure. 

\textbf{Mobility and Relaying in FL}.
%The indirect communication channel in FedEx is built by leveraging mobile transporters as dedicated mobile relays. 
Using peer clients as relays to upload local model updates was studied in \cite{yemini2022semi} to combat the unstable wireless channels. However, this work requires not only direct (unstable) server-client channel but also D2D communication channels among static clients. The role of mobility in (asynchronous) FL was first studied in \cite{bian2022mobility}, which uses mobile relay clients to enable more frequent server-client communications. Mobility in \cite{bian2022mobility}, however, is passive since how mobile clients move is not controllable by the algorithm. In our work, FedEx uses dedicated mobile transporters to enable the indirect server-client communications and actively designs the routes of the transporters. Compared to the earlier conference version \cite{10089783}, this work incorporates the transporters' diverse moving speeds and resource energy constraints into the proposed framework for indirect communication FL. This enhancement adds a new dimension to the applicability and practicality of the proposed approach, which can better support real-world scenarios with varying resource availability and mobility constraints.

\textbf{FL with Delayed Averaging}. %Indirect communications result in delayed global model dissemination and local model collection. 
FL with delayed gradient averaging was studied in \cite{zhu2021delayed}. However, the considered setting is very different from ours. Firstly, the work \cite{zhu2021delayed} considers a server-less fully distributed setting where there is a large communication delay among the peer clients. Secondly, the considered delay is only in the (equivalently) global model dissemination phase whereas in our work, delay is also incurred in the local model uploading phase. Furthermore, learning is  synchronous among all clients whereas clients in our setting perform local training asynchronously.

\textbf{UAV Route Planning }. There is a significant body of literature on designing UAV routes for efficient data collection under resource constraints, such as \cite{samir2019uav,hu2020meta,zhao2021multi,zhu2021uav}. However, our work focuses on tailoring the design of mobile transporters with resource constraints to the indirect communication FL problem, which previous works have not addressed.

\section{Problem Formulation}
We consider a Federated Learning (FL) system comprising of one parameter server and $N$ clients. The clients are dispersed across a wide geographical area and lack direct communication capabilities, which means that there is no direct communication channel between any client and the server. For ease of notation, we denote the server as $0$ and the clients as the set $\mathcal{N} = {1, 2, \cdots, N}$. Both the server and clients are stationary and have fixed locations. Based on the location coordinates of the server and clients, a symmetric distance matrix $D \in \mathbb{R}^{(N+1)\times (N+1)}$ can be easily computed, which represents the distance between any two devices. Specifically, $D_{0i}=D_{i0}$ denotes the distance between the server and client $i$, while $D_{ij} = D_{ji}$ denotes the distance between clients $i$ and $j$.
% We further consider that the maximum communication distance of each client are equal which denote as $D_c$. Similarly, the maximum communication distance of the parameter server is denoted as $D_s$. Note that in this paper, we consider the case that $D_s << D_{ij}$ and $D_c << D_{ij}$ hold for any $i,j \in \{0, 1, \cdots, N\}$, which means all direct communication channels between the clients and the server are absent.

In our scenario, each client $i$ possesses a dataset, and collectively, the clients must train a machine learning model under the coordination of the server by solving the following distributed optimization problem:
\begin{align}
\min_x f(x) = \frac{1}{N}\sum_{i=1}^N f_i(x) = \frac{1}{N}\sum_{i=1}^N \mathbb{E}_{\zeta_i}[F_i(x, \zeta_i)],
\end{align}
Here, $f_i: \mathbb{R}^d \to \mathbb{R}$ is a non-convex loss function specific to client $i$, $F_i$ is the estimated loss function based on a mini-batch data sample $\zeta_i$ drawn from client $i$'s dataset, and $x \in \mathbb{R}^d$ is the model parameter to be learned.

Typically, conventional Federated Learning (FL) frameworks require periodic or non-periodic communication between the clients and the server to train such a machine learning model. However, in our case, direct communication channels between the clients and the server do not exist, rendering such FL frameworks useless. In the next section, we introduce a novel FL framework that enables FL in such extreme communication scenarios (i.e., no direct communications).

\section{FedEx: FL via Model Express Delivery}
\subsection{FedEx Framework}
To overcome the communication challenges arising from the absence of direct communication channels between the clients and the server, we propose a novel approach that involves leveraging mobile transporters, such as UAVs, to establish indirect communication channels. These transporters operate by transporting global and local models between the server and clients, similar to how delivery trucks transport parcels between warehouses and customers. We refer to this new Federated Learning (FL) framework as FedEx, which stands for FL via Model Express Delivery.

Assuming $K$ mobile transporters are available for use in FedEx, we can partition the clients into $K$ non-overlapping subsets and assign each subset to one mobile transporter. Let $\mathcal{R}_k \subset \mathcal{N}$ denote the subset of clients covered by transporter $k$. We have $\mathcal{R}_k\cap\mathcal{R}_{k'} = \emptyset, \forall k \neq k'$ and $\cup_{k=1}^K \mathcal{R}_k = \mathcal{N}$. Additionally, let $R_k = |\mathcal{R}_k|$ denote the number of clients in subset $\mathcal{R}_k$. For our proposed framework, we assume that the mobile transporters fly horizontally at a constant altitude $H$ and we discretize time into slots, indexed by $t=0, 1, 2, \cdots$, where each time slot corresponds to the duration of completing one local training step by a client. We also assume that the moving speed of each transporter is constant, but different among transporters. We denote the moving speed of transporter $k$ as $V_k$, measured by the distance traveled per time slot. Once a transporter reaches either the server or client $i$, it hovers above to establish transmission. Our focus in this paper is primarily on building FL via indirect communication channels. Therefore, we simplify the velocity changes of each transporter and ignore the acceleration/deceleration stage between hovering and the constant speed $V_k$. In contrast to the clients and the server, which are fixed and typically have ample energy resources, mobile transporters are energy-restricted. As a result, in the following sections, we concentrate on analyzing the energy consumption of the $K$ transporters.

\subsection{Transporters Energy Cost}
The energy consumption of the transporters can be broken down into two components. The first component is communication-related energy, which includes the energy used for broadcasting the global model to the clients and the energy used for downloading the global model from the server. The second component is propulsion energy, which is required to keep the transporter (e.g., UAV) aloft and to support its mobility.

\subsubsection{Communication energy}
We assume that all transporters have the same transmission power $p$, and transmission occurs when the transporter hovers above the destination, with the transmission distance being constant at the flight altitude $H$. Following \cite{7888557}, we assume that the communication link between the transporter and the destination is dominated by the Line of Sight (LoS) channel, which follows the free-space path loss model. The channel power at a reference distance of $d_0 = 1$m is denoted by $\beta_0$, and the channel can be expressed as $h = \frac{\beta_0}{H^2}$. Using Shannon's channel capacity equation, the transmission rate $r$ can be expressed as:
\begin{align}
r = B\log_2\left(1 + \frac{h p}{BN_0}\right),
\end{align}
where $B$ is the bandwidth for each client, and $N_0$ is the noise power spectral density.

In FedEx, the transporter facilitates indirect model/gradient exchange between the server and clients. Thus, the transmission size is determined by the model size, rather than the size of the large local dataset. As the sizes of models during each transmission are roughly similar, we denote the transmission size as $S$. Then, the transmission time can be calculated as:
\begin{align}
T_\text{trans} = \frac{S}{r} = \frac{S}{B\log_2\left(1 + \frac{h p}{BN_0}\right)}.
\end{align}

% Based on the above, the communication energy for each transporter's model download/broadcast is given by:
% \begin{align}
% E_\text{trans} = \frac{pS}{B\log_2\left(1 + \frac{\beta_0 p}{H^2BN_0}\right)}.
% \end{align}

In each round trip of transporter $k$, it needs to broadcast the global model to $R_k$ clients. Therefore, the transmission energy for transporter $k$ is given by:
\begin{align}
E_\text{trans}^k = \frac{pSR_k}{B\log_2\left(1 + \frac{\beta_0 p}{H^2BN_0}\right)}.
\end{align}

\subsubsection{Propulsion energy}
The propulsion energy component consists of energy consumption during steady straight-and-level flight (SLF) between each pair of destinations and energy consumption during hovering above each destination. Let $z_{ij}, \forall i,j\in \mathcal{R}_k\cup{0}$ be a binary variable indicating whether a path from device $i$ to device $j$ is included in the tour. The round trip SLF time for transporter $k$ can be calculated as follows:
\begin{align}
T_\text{SLF}^k = \frac{1}{V_k}\sum_{i=0}^N \sum_{j\neq i, j=0}^N D_{ij}z_{ij}.
\end{align}

To ensure that the tour covers all devices exactly once, we have $2(N+1)$ linear equations, which express that each device has exactly one incoming path and one outgoing path:
\begin{align}
\sum_{i=0, i\neq j}^N z_{ij} = 1, \sum_{j=0, j\neq i}^n z_{ij} = 1, \forall j = 0, \cdots, N.
\end{align}

Following \cite{7888557}, we estimate the SLF energy cost for each transporter $k$ as:
\begin{align}
\label{SLF}
E_\text{SLF}^k = T_\text{SLF}^k\left(c_1V_k^3 + \frac{c_2}{V_k} \right),
\end{align}
where $c_1$ and $c_2$ are constant numbers. In Eq. \ref{SLF}, the cubic of the speed $V_k$ is known as the parasitic power for overcoming the parasitic drag due to the aircraft's skin friction. The term that is inversely proportional to the speed $V_k$ is known as the induced power for overcoming the lift-induced drag, which is the resulting drag force due to wings redirecting air to generate lift for compensating the transporter's weight \cite{9acedc215cb44e448dccb898e60f39af}.

In addition to the SLF energy, there is also energy consumption during hovering for model transmission. We denote the power consumed to hover as $p_\text{hover}$. The total hovering energy consumption of transporter $k$ during one round trip is:
\begin{align}
E_\text{hover}^k = R_k T_\text{trans} p_\text{hover}.
\end{align}

Therefore, the total propulsion energy consumption of transporter $k$ is denoted by $E_\text{prop}^k = E_\text{hover}^k + E_\text{SLF}^k$.

To be more realistic, we consider that the energy budget of each transporter $k$ for each collecting round is limited (e.g., the energy budget $E_\text{budget}^k$ is constrained by the battery of transporter $k$). Without loss of generality, we assume that each transporter changes its battery to reset the energy budget once it returns to the server and ignore the time cost during the battery exchange.

\subsection{Round Trip Time}
The round trip time (RTT) of each transporter $k$ includes hovering transmission time and SLF moving time. It can be expressed as:
\begin{align}
\Delta_k = R_k T_\text{trans} + T_\text{SLF}^k
\end{align}
where $R_k$ is the number of clients assigned to transporter $k$, $T_\text{trans}$ is the transmission time, and $T_\text{SLF}^k$ is the SLF time for transporter $k$ to complete its predetermined tour.

Given a set of clients $\mathcal{R}_k$, the first term of $\Delta_k$ becomes constant. Therefore, finding the shortest tour for the given set $\mathcal{R}_k$ is equivalent to minimizing the second term, which is a well-known problem known as the travelling salesman problem. We will discuss how to optimize the assignment of clients to transporters in Section \ref{opt_section}, after understanding the convergence behavior of FedEx. For now, we treat the set of clients assigned to transporter $k$ and the corresponding tour RTT $\Delta_k$ as already decided.

At runtime, we propose two different versions of FedEx, depending on whether the transporters perform their tours synchronously or asynchronously. We name them FedEx-Sync and FedEx-Async, respectively. Fig. \ref{sync-async} illustrates the key differences between the two versions.

\begin{figure}[tb]
	\centering
	\includegraphics[width=0.5\linewidth]{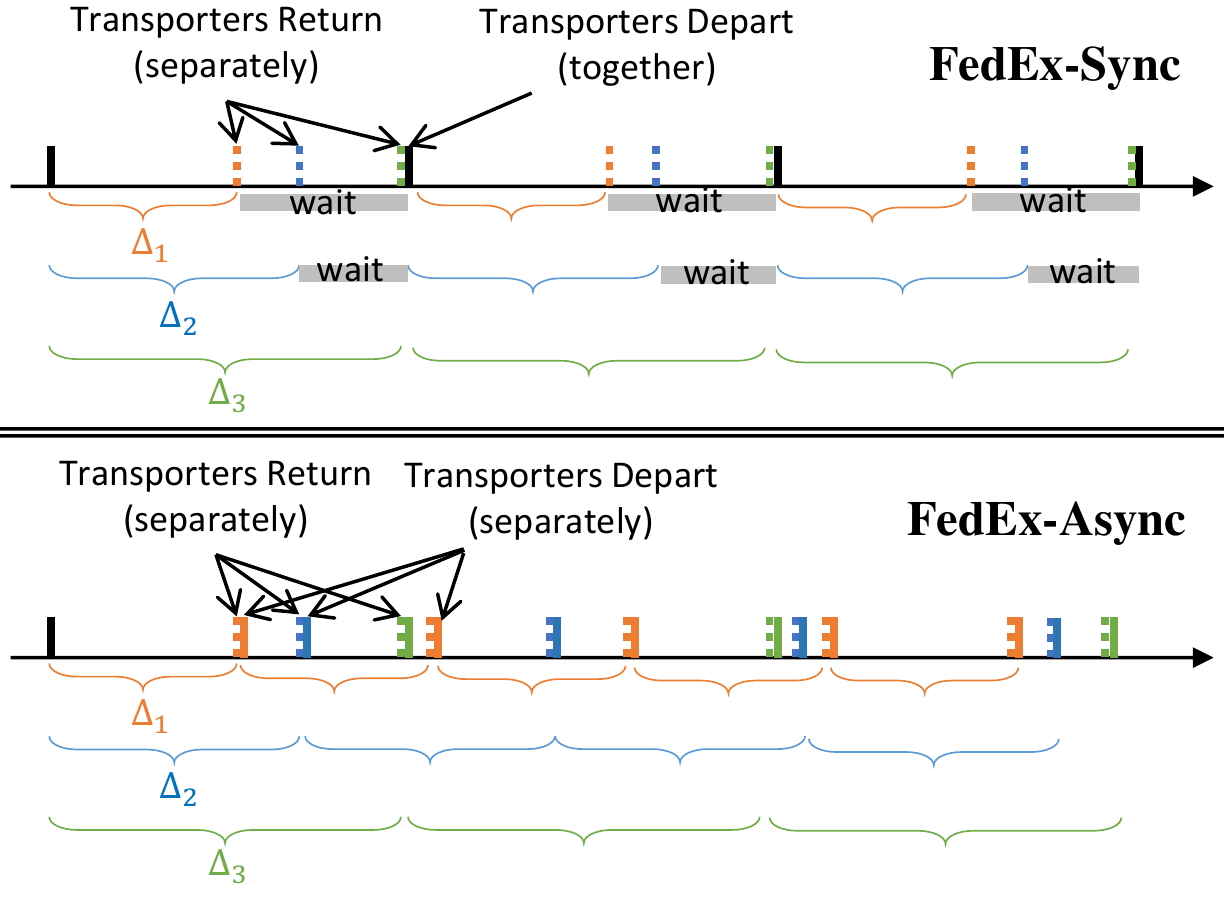}
	\caption{Illustration of the key difference between FedEx-Sync and FedEx-Async.} \label{sync-async}
	\vspace{-15 pt}
\end{figure}%

\subsection{FedEx-Sync}
In the synchronized version of FedEx, namely FedEx-Sync, the transporters depart from the server \textit{at the same time} every time they start a new tour among their assigned clients. Because the transporters have different tour RTTs, the ones with shorter RTT need to wait for the others with longer RTT to come back to start the next tour. Therefore, FedEx-Sync is naturally composed of synchronized learning rounds, with each round having $\Delta \triangleq \max_{k}\Delta_k$ time slots. In each round (denote the first slot of this round as $t_0$), the following events occur.
\begin{itemize}
    \item At the beginning of each round, each transporter downloads the current global model $x^{t_0}$ from the server. The transporters then start a tour among their assigned clients according to the pre-determined client visiting order.
    \item  When a transporter (say transporter $k$) meets a client (say client $i$) at time $t > t_0$, client $i$ downloads the global model, i.e., $x^{t_0}$, that transporter $k$ currently carries. Then the transporter leaves and client $i$ uses $x^t_k = x^{t_0}$ as the initial model to train a new local model using its own local dataset until the next time it meets the transporter. Because the transporter takes $\Delta$ time slots to revisit client $i$, the local training will last $\Delta$ time slots. The local training uses a mini-batch SGD method:
    \begin{align}
        x^{s+1}_i = x^s_i - \eta g^s_i, \forall s = t, \cdots, t + \Delta-1,
    \end{align}
    where $g^s_i = \nabla F_i(x^s_i, \zeta^s_i)$ is the stochastic gradient on a randomly drawn mini-batch $\zeta^s_i$ and $\eta$ is the learning rate. Let $m^t_i \in \mathbb{R}^d$ be the cumulative local updates (CLU) of client $i$ at time $s$ since its last meeting with the transporter, which is updated recursively as follows
    \begin{align}
        m^{s}_i = \sum_{s'=t}^{s-1} \eta g^s_i, \forall s = t, \cdots, t+\Delta.
    \end{align}
    \item When a transporter (say transporter $k$) meets a client (say client $i$) at time $t > t_0$, client $i$ also uploads its current CLU to transporter $k$. Note, however, that this CLU is obtained based on the global model from the \textit{previous} round, i.e., $x^{t_0 - \Delta}$. Transporter $k$ maintains an aggregated CLU $u^t_k$ during the current tour to save storage space and updates it whenever a new client CLU is received according to
    \begin{align}
        u^t_k = u^{t-1}_k + m^t_i.
    \end{align}
    \item When the transporter returns to the server, the aggregated CLU is used to update the global model. In FedEx-Sync, the global model is updated synchronously at the end of each round as follows
    \begin{align}
        x^{t_0+\Delta} = x^{t_0+\Delta - 1} - \frac{1}{N}\sum_{k=1}^K u^{t_0+\Delta - 1}_k.
    \end{align}
\end{itemize}

\subsection{FedEx-Async}
The synchronization in FedEx-Sync is achieved by asking faster transporters to wait for slower transporters. This, however, introduces extra delays for faster transporters. In the case where the slowest transporter takes a tour with a very large RTT, then all the other transporters will have to wait for a long time before starting their next tour. In FedEx-Async, we remove such waiting time by letting the transporter start a new tour immediately after finishing the previous tour. In this way, more clients will be able to perform more frequent global/local model exchanges with the server. FedEx-Async share many similarities with FedEx-Sync and the biggest difference is that each transporter will have \textit{individualized} learning rounds not necessarily synchronized with others. For transporter $k$, its learning round lasts $\Delta_k$ time slots and the following events occur in each round (denote the first slot as $t_0$).
\begin{itemize}
    \item At the beginning of each round, transporter $k$ downloads the current model $x^{t_0}$ from the server. Then it starts its tour among the assigned clients.
    \item When transporter $k$ meets client $i$ at time $t>t_0$, client $i$ downloads $x^{t_0}$ from transporter $k$ and uses $x^t_k = x^{t_0}$ as the initial model to train a new local model until the next time it meets the transporter. Different from FedEx-Syncs, the local training will last $\Delta_k$ time slots, which are different across transporters. 
    \item When transporter $k$ meets client $i$ at time $t > t_0$, client $i$ also uploads its current CLU, which is obtained based on the previous round global model $x^{t_0 - \Delta_k}$, to transporter $k$. Transporter $k$ then updates its aggregated CLU $u^t_k$. 
    \item When the transporter returns to the server, the global model is updated as follows
    \begin{align}
        x^{t_0+\Delta_k} = x^{t_0+\Delta_k - 1} - \frac{1}{N} u^{t_0+\Delta_k - 1}_{k}.
    \end{align}
    Again, this is different from FedEx-Sync since the server does not have to wait for all transporters to update the global model. Note that it is possible that multiple transporters can return to the server in the same time slot (say $t$). In this case, the global model update rule is changed to
    \begin{align}
        x^{t_0+\Delta_k} = x^{t_0+\Delta_k - 1} - \frac{1}{N}\sum_{k' \in \mathcal{S}^{t_0+\Delta_k}} u^{t_0+\Delta_k - 1}_{k'} ,
    \end{align}
    where $\mathcal{S}^{t_0+\Delta_k -1}$ is the set of clients that return to the server at time slot $t_0+\Delta_k -1$. 
\end{itemize}

\section{Convergence Analysis}
% In this section, we analyze the convergence of FedEx. 
% Because FedEx-Sync can be considered as a special case of FedEx-Async where all $\Delta_k, \forall k$ take the same value, we will focus on the convergence analysis of FedEx-Async. %Due to the complicated server-transporter-client communication patterns in FedEx, traditional convergence proofs for synchronized FL algorithm do not apply in our problem. In this paper, we leverage the virtual sequence technique to prove the convergence of FedEx-Sync and FedEx-Async in a unified framework. 

\subsection{Aligning Client Training}
Before analyzing the convergence of FedEx, we first describe an equivalent view of FedEx that aligns the local training of clients in the same subset. Consider a learning round of transporter $k$ that carries the global model $x^{t_0}$. Because of the different locations of clients in $\mathcal{R}_k$, the clients receive $x^{t_0}$ and start their new round of local training at different time slots. Once they finish the current round of local training, their CLUs based on $x^{t_0}$ will be uploaded via the transporter to the server at time slot $t_0 + 2\Delta_k$. At that moment, the global model gets an update using these CLUs. 

The unaligned local training of clients, even if covered by the same transporter, would create a major challenge for the convergence analysis of FedEx. Fortunately, there is an equivalent (but imaginary) client training procedure that produces exactly the same global model sequence. Specifically, imagine that clients in $\mathcal{R}_k$ receive the global model $x^{t_0}$ immediately at time slot $t_0$ and perform their local training for  $\Delta_k$ time slots. Then their CLUs are delayed one round to be uploaded to the server. That is, at time slot $t_0 + 2\Delta_k$, the global model gets an update. 
It is clear that the global model update is not affected at all by this change but the local training among clients in the same subset $\mathcal{R}_k$ is now perfectly aligned. Since we are interested in the convergence of the global model, we will consider the equivalent aligned client training procedure in our convergence proof. 

Fig. \ref{equivalent} illustrates this equivalent view, where the local training of the clients is aligned while the global model evolution is unaffected. Essentially, the alignment moves the download delay to the upload phase, but the total delay remains the same. With this change, FedEx-Sync becomes a familiar synchronous FL algorithm but with one round CLU upload delay. In addition to the CLU upload delay, Fed-Async still features asynchronized learning across the client subsets.

\begin{figure}[tb]
	\centering
	\includegraphics[width=0.5\linewidth]{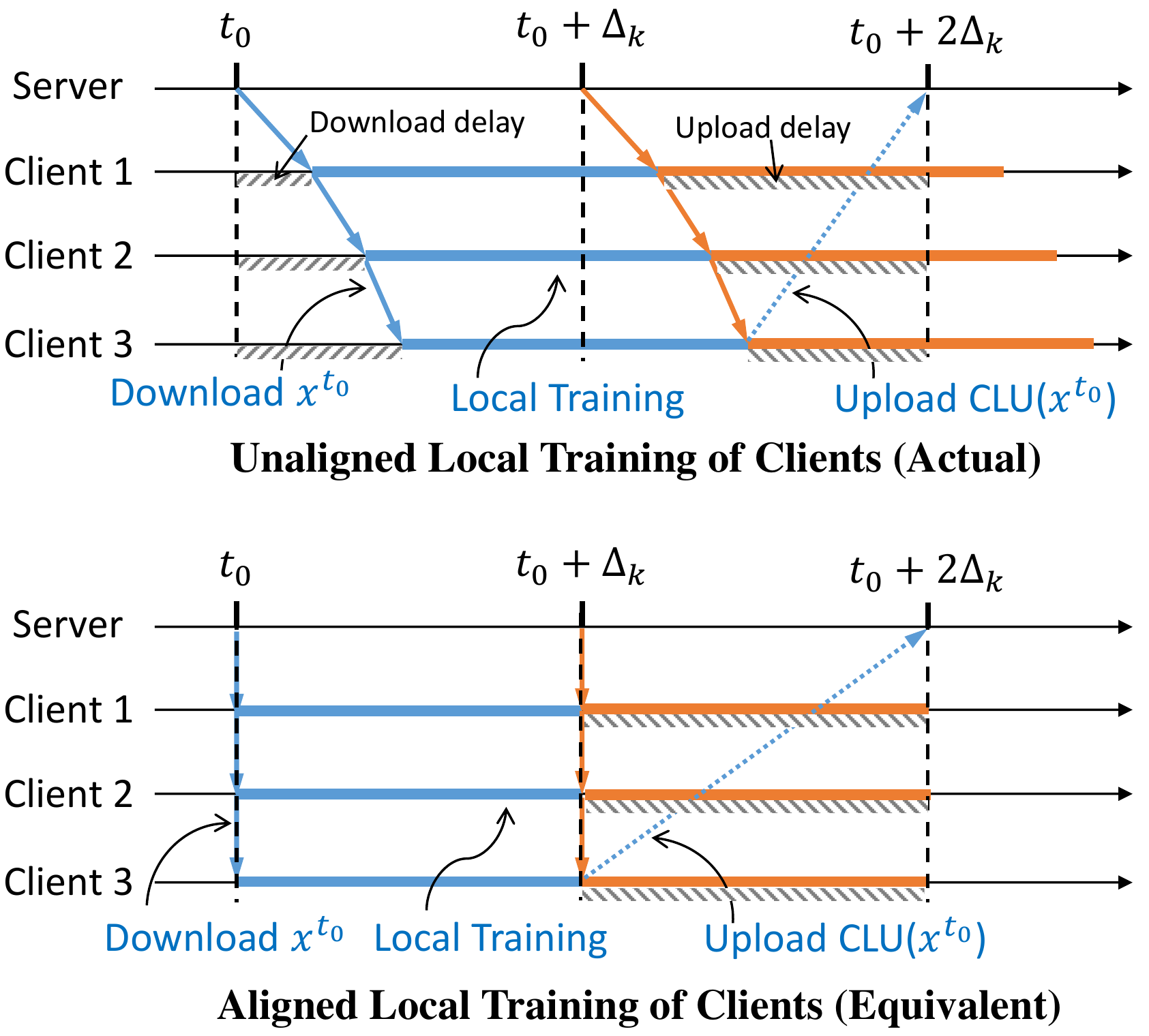}
	\caption{Illustration of the equivalent view of client training. The global model evolution is unaffected by the equivalent view.} \label{equivalent}
	\vspace{-10 pt}
\end{figure}%

\subsection{Assumptions}
Our convergence analysis will utilize the following standard assumptions. 
\begin{assumption}[Lipschitz Smoothness]
There exists a constant $L > 0$ such that $\|\nabla f_i(x) - \nabla f_i(y)\|\leq L \|x - y\|$, $\forall x, y \in \mathbb{R}^d$ and $\forall i = 1, \cdots, N$. 
\end{assumption}

\begin{assumption}[Unbiased Local Gradient Estimate]
The local gradient estimate is unbiased, i.e., $\mathbb{E}_\zeta F_i(x, \zeta) = \nabla f_i(x)$, $\forall x$ and $\forall i = 1, \cdots, N$. 
\end{assumption}

\begin{assumption}[Bounded Gradient]
There exists a constant $G > 0$ such that $\mathbb{E}\|\nabla F_i(x, \zeta)\|^2 \leq G^2$, $\forall x \in \mathbb{R}^d$ and $\forall i = 1, \cdots, N$. 
\end{assumption}

\begin{assumption}[Bounded Variance]
There exists a constant $\sigma > 0$ such that $\mathbb{E}_{\zeta}\|\nabla F_i(x, \zeta) - \nabla f_i(x)\|^2\leq \sigma^2$, $\forall x \in \mathbb{R}^d$ and $\forall i = 1, \cdots, N$. 
\end{assumption}

We do not assume convexity on the loss function $f$ or i.i.d. client data and hence the convergence analysis applies to non-convex and non-i.i.d. problems as well. 

\subsection{Convergence Bound}
The \textit{real sequence} of the global model is the actual global models maintained at the server over time, which can be calculated as follows according to FedEx:
\begin{align}
    x^t = x^0 - \frac{1}{N}\sum_{i=1}^N \sum_{s=0}^{\phi_i(t)}\eta g^s_i, \forall t,
\end{align}
where we define $\phi_i(t)$ as the time slot up to when all corresponding gradients of client $i$ have been received at time $t$. In other words, at time slot $t$, the server has received gradients $g^0_i, \cdots, g^{\phi_i(t)}_i$ from client $i$ (via the transporter). In FedEx-Sync, all clients have the same indirect communication patterns with the server and hence $\phi_i(t) = \phi_j(t), \forall i,j \in \mathcal{N}$. In FedEx-Async, clients belonging to the same transporter have the same indirect communication patterns with the server and hence, $\phi_i(t) = \phi_j(t), \forall i, j\in \mathcal{R}_k, \forall k$. 

\begin{lemma} \label{syn-difference}
For $t > \Delta$, the difference between any client $i$'s local model and the averaged global model is bounded as follows:
\begin{align}
    \mathbb{E}\left[\|x^t - x^t_i\|^2\right] \leq 4\eta^2 G^2 \Delta^2
\end{align}
\end{lemma}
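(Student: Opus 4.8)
The plan is to argue entirely in the equivalent aligned view. There, client $i$ (in subset $\mathcal{R}_k$) receives the current global model at the start of its current round, say at time $t_0$, sets $x^{t_0}_i = x^{t_0}$, and runs mini-batch SGD during that round; since a round lasts at most $\Delta = \max_k \Delta_k$ time slots, we have $t_0 \le t < t_0 + \Delta$. Unrolling $x^{s+1}_i = x^s_i - \eta g^s_i$ over the round gives the exact identity
\begin{align}
    x^t_i - x^{t_0} = -\sum_{s} \eta\, g^s_i ,
\end{align}
a sum of at most $\Delta$ stochastic-gradient terms (those from the portion of client $i$'s local training completed by time $t$).

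Next I would show that in FedEx-Sync the server's global model is frozen throughout client $i$'s round, i.e. $x^t = x^{t_0}$. Starting from the real-sequence formula $x^t = x^0 - \frac{1}{N}\sum_{j=1}^N \sum_{s=0}^{\phi_j(t)} \eta\, g^s_j$, I would use that in FedEx-Sync all clients share one common communication pattern, so $\phi_j$ is a single nondecreasing step function, and that it can increment only when a transporter returns to the server, which in FedEx-Sync occurs only at the synchronized round boundaries. Hence $\phi_j$ is constant on $[t_0, t_0+\Delta)$, so $\phi_j(t) = \phi_j(t_0)$ for every $j$ and thus $x^t = x^{t_0}$. Combining with the previous display yields $x^t - x^t_i = \sum_{s} \eta\, g^s_i$, the same sum of at most $\Delta$ gradient terms.

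It then remains to estimate this sum. Applying the Cauchy--Schwarz inequality $\|\sum_{s} v_s\|^2 \le \Delta \sum_{s}\|v_s\|^2$, taking expectations, and invoking the bounded-gradient assumption $\mathbb{E}\|\nabla F_i(x,\zeta)\|^2 \le G^2$ term by term gives
\begin{align}
    \mathbb{E}\!\left[\|x^t - x^t_i\|^2\right] \le \Delta \cdot \Delta \cdot \eta^2 G^2 = \eta^2 G^2 \Delta^2 \le 4\eta^2 G^2 \Delta^2 ,
\end{align}
which is the claimed bound (the constant $4$ is loose in the synchronous case). The hypothesis $t > \Delta$ is used only to keep $t$ out of the initial round, which keeps the round-boundary bookkeeping of the second step clean.

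The step I expect to be the main obstacle is the second one --- verifying that $x^t$ is frozen over client $i$'s local-training window, which needs care with the indexing of $\phi_j(\cdot)$. A more robust alternative, which would also accommodate the FedEx-Async global sequence (where $x^t$ genuinely drifts within a round because other transporters keep returning), is to split $x^t - x^t_i = (x^t - x^{t_0}) + (x^{t_0} - x^t_i)$, apply $\mathbb{E}\|a+b\|^2 \le 2\mathbb{E}\|a\|^2 + 2\mathbb{E}\|b\|^2$, bound $x^{t_0} - x^t_i$ by $\eta^2 G^2 \Delta^2$ as above, and bound the global drift $x^t - x^{t_0} = -\frac{1}{N}\sum_j \sum_{s=\phi_j(t_0)+1}^{\phi_j(t)} \eta\, g^s_j$ by controlling $\phi_j(t) - \phi_j(t_0)$ over a window of length $\le \Delta$; this is where the factor $4$ naturally enters.
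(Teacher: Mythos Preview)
Your alternative route is exactly what the paper does. The paper lets $t'$ be the last time client $i$ received the global model (your $t_0$), splits
\[
\mathbb{E}\|x^t - x^t_i\|^2 \le 2\,\mathbb{E}\|x^t - x^{t'}\|^2 + 2\,\mathbb{E}\|x^{t'} - x^t_i\|^2,
\]
bounds the local drift $x^{t'}-x^t_i=-\sum_{s=t'}^{t-1}\eta g^s_i$ by $\eta^2 G^2\Delta^2$ via Cauchy--Schwarz and Assumption~3, and bounds the global drift $x^t - x^{t'}= -\frac{1}{N}\sum_j\sum_s \eta g^s_j$ over the same window by the same quantity. The $2+2$ gives the constant $4$.

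Your \emph{primary} argument, however, rests on the claim that $x^t=x^{t_0}$ throughout the round, and that is not how the paper uses $x^t$ here. In both this lemma's proof and the Theorem~1 proof that consumes it, the paper treats $x^t$ via the per-slot recursion $x^{t+1}-x^t=-\frac{\eta}{N}\sum_i g^{t-\Delta}_i$ for every $t>\Delta$ (they remark that this interpolated $x^t$ ``is not observable at $t\neq a\Delta$'' but is still used for analysis). Under that convention $x^t-x^{t'}$ is the nontrivial sum $-\frac{1}{N}\sum_j\sum_{s=t'-\Delta}^{t-1-\Delta}\eta g^s_j$, not zero, so the freeze step fails and the two-term split is genuinely needed. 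In short: your anticipated obstacle is real for the object the paper is bounding, your stated fallback is the correct fix, and it coincides with the paper's own proof.
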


\begin{proof}
The proof is shown in Appendix [\ref{proof_l1}].
\end{proof}

\begin{theorem}\label{Th_1}
For FedEx-Sync, with the bounded initial gradient and learning rate $0<\eta \leq \frac{1}{L} $ , after $T > \Delta$ iterations, we have:

\begin{align}
    &\frac{1}{T}\sum_{t=0}^{T-1} \mathbb{E}\left[\|\nabla f(x^t)\|^2\right]\leq \frac{2}{\eta T}\left(f(x^0) - f* \right) + \frac{\Delta}{T} \|\nabla f(x^0)\|^2+ \frac{T-\Delta}{T}10\eta^2 G^2 L^2 \Delta^2 
    + \frac{T-\Delta}{T}\frac{L\eta\sigma^2}{N} 
\end{align}
\end{theorem}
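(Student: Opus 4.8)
The plan is to run a standard nonconvex ``descent lemma + telescoping'' argument, but organized at the level of \emph{learning rounds} (blocks of $\Delta$ consecutive slots), with the one-round upload delay treated as a controlled perturbation. First I would record the structure of the aligned view: the server model is constant on each round $[n\Delta,(n+1)\Delta)$, the first round produces no CLU so $x^{\Delta}=x^0$, and for $n\ge 1$ one has $x^{(n+1)\Delta}=x^{n\Delta}-\eta\,\bar g^{(n-1)}$, where $\bar g^{(n-1)}:=\frac1N\sum_{i=1}^N\sum_{s\in\,\mathrm{round}\,n-1}g^s_i$ --- i.e.\ the round-$n$ update is driven by the gradients of round $n-1$, which is the origin of every extra term. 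Since $x^t$ is round-constant, $\sum_{t=0}^{T-1}\mathbb E\|\nabla f(x^t)\|^2=\Delta\sum_{n=0}^{T/\Delta-1}\mathbb E\|\nabla f(x^{n\Delta})\|^2$; the $n=0$ summand $\Delta\|\nabla f(x^0)\|^2$ is not produced by the telescoping (there is no round $-1$), so it is simply carried along and, after the final division by $T$, becomes the $\frac{\Delta}{T}\|\nabla f(x^0)\|^2$ term. The hypothesis $T>\Delta$ is exactly what leaves the remaining sum nonempty.

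For each $n\ge 1$ I would apply $L$-smoothness between consecutive boundary models, $f(x^{(n+1)\Delta})\le f(x^{n\Delta})-\eta\langle\nabla f(x^{n\Delta}),\bar g^{(n-1)}\rangle+\tfrac{L\eta^2}{2}\|\bar g^{(n-1)}\|^2$, and take expectations. Unbiasedness plus the tower property replaces $\bar g^{(n-1)}$ by $\sum_s h^s$ with $h^s:=\frac1N\sum_i\nabla f_i(x^s_i)$ in the inner product, and applying $-\langle a,b\rangle=-\tfrac12\|a\|^2-\tfrac12\|b\|^2+\tfrac12\|a-b\|^2$ \emph{slot by slot} yields the genuine descent term $-\tfrac{\eta\Delta}{2}\|\nabla f(x^{n\Delta})\|^2$, a helpful $-\tfrac{\eta}{2}\sum_s\|h^s\|^2$, and a drift term $\tfrac{\eta}{2}\sum_s\|\nabla f(x^{n\Delta})-h^s\|^2$. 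For the quadratic term I would split $\bar g^{(n-1)}$ into its mean-field part $\sum_s h^s$ and a zero-mean martingale part whose second moment is $\le\Delta\sigma^2/N$ (using unbiasedness, bounded variance, independence across clients, and the martingale property across a client's $\Delta$ slots), and bound $\|\sum_s h^s\|^2\le\Delta\sum_s\|h^s\|^2$; provided $L\eta\Delta\le 1$ (which is where the stepsize restriction is used, the effective per-round step being $\eta\Delta$) this signal part is absorbed into the $-\tfrac{\eta}{2}\sum_s\|h^s\|^2$ above, leaving only $\tfrac{L\eta^2}{2}\cdot\tfrac{\Delta\sigma^2}{N}$ per round --- which sums over the $(T-\Delta)/\Delta$ rounds and rescales to exactly $\tfrac{T-\Delta}{T}\tfrac{L\eta\sigma^2}{N}$.

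The crux, and the step I expect to be the main obstacle, is the drift term $\|\nabla f(x^{n\Delta})-h^s\|^2$ for $s$ in round $n-1$. Lipschitzness and Jensen give $\|\nabla f(x^{n\Delta})-h^s\|^2\le\frac{L^2}{N}\sum_i\|x^{n\Delta}-x^s_i\|^2$, and now the delay forces \emph{two} layers of discrepancy: writing $\|x^{n\Delta}-x^s_i\|^2\le 2\|x^{n\Delta}-x^s\|^2+2\|x^s-x^s_i\|^2$, the second term is the client-side local drift accrued within round $n-1$, bounded by Lemma~\ref{syn-difference} by $4\eta^2G^2\Delta^2$, while $\|x^{n\Delta}-x^s\|^2=\|\eta\bar g^{(n-2)}\|^2$ is the \emph{server-side} one-round drift, bounded by Cauchy--Schwarz and the bounded-gradient assumption by $\eta^2G^2\Delta^2$. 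This gives $\mathbb E\|x^{n\Delta}-x^s_i\|^2\le 10\eta^2G^2\Delta^2$ and hence $\mathbb E\|\nabla f(x^{n\Delta})-h^s\|^2\le 10\eta^2G^2L^2\Delta^2$ --- the source of the constant $10$. Finally I would sum the round-wise inequality over $n=1,\dots,T/\Delta-1$, telescope the $f(x^{n\Delta})$ (using $x^\Delta=x^0$ and $f\ge f^*$), multiply by $\Delta$ to turn $\sum_n\mathbb E\|\nabla f(x^{n\Delta})\|^2$ back into $\sum_{t\ge\Delta}\mathbb E\|\nabla f(x^t)\|^2$, restore the carried-along $\Delta\|\nabla f(x^0)\|^2$ from the first round, and divide by $T$. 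Beyond keeping the two drift sources cleanly separated, the only care needed is in the conditioning used for the unbiasedness/martingale steps and in the first-round boundary bookkeeping.
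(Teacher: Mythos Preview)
Your round-level organization is a natural alternative, and the drift bookkeeping you outline (two layers: the server's one-round jump plus the client's within-round local drift) correctly produces the constant $10$ and the $(T-\Delta)/T$ prefactors. The telescoping and first-round accounting in your final paragraph also check out and land on exactly the stated bound.

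There is, however, a genuine gap in the stepsize condition. Working at the round level, the smoothness step produces $\tfrac{L\eta^2}{2}\,\mathbb{E}\|\bar g^{(n-1)}\|^2$, and after isolating the mean-field part and using $\|\sum_s h^s\|^2\le\Delta\sum_s\|h^s\|^2$ you obtain $\tfrac{L\eta^2\Delta}{2}\sum_s\|h^s\|^2$. Absorbing this into $-\tfrac{\eta}{2}\sum_s\|h^s\|^2$ requires $L\eta\Delta\le 1$, which for $\Delta>1$ is strictly stronger than the theorem's hypothesis $\eta\le 1/L$. So your argument proves the bound only under a smaller stepsize range than claimed.

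The paper sidesteps this by \emph{not} working at the round level. It introduces a slot-by-slot interpolated sequence satisfying $x^{t+1}-x^t=-\tfrac{\eta}{N}\sum_i g_i^{\,t-\Delta}$, which agrees with the actual server model at round boundaries but moves every slot (hence the remark that this $x^t$ ``is not observable at $t\ne a\Delta$''). Applying the descent lemma slot by slot, the quadratic term is only $\tfrac{L\eta^2}{2}\,\mathbb{E}\|\tfrac1N\sum_i g_i^{\,t-\Delta}\|^2$; after the variance split its signal part is $\tfrac{L\eta^2}{2}\|h^{t-\Delta}\|^2$, which cancels against $-\tfrac{\eta}{2}\|h^{t-\Delta}\|^2$ as soon as $\eta\le 1/L$, with no extra factor of $\Delta$. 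The drift is also pivoted differently: the paper splits through the \emph{client's} current model, $\|x^t-x_i^{t-\Delta}\|^2\le 2\|x^t-x_i^t\|^2+2\|x_i^t-x_i^{t-\Delta}\|^2$, bounding the first piece by Lemma~\ref{syn-difference} ($\le 4\eta^2G^2\Delta^2$) and the second (one round of local SGD) by $\eta^2G^2\Delta^2$, again giving the $10$. If you swap your round-level descent step for this per-slot interpolated sequence, the rest of your plan (variance bound, initial-round carry-over, telescoping) transfers unchanged and yields the theorem under the stated $\eta\le 1/L$.
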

\begin{proof}
The proof is shown in Appendix [\ref{proof_t1}].
\end{proof}
\begin{remark}
As $T$ is always greatly larger than $\Delta$, we can roughly simplify the bound as 
\begin{align}
    &\frac{1}{T}\sum_{t=0}^{T-1} \mathbb{E}\left[\|\nabla f(x^t)\|^2\right]\leq \frac{2}{\eta T}\left(f(x^0) - f* \right) + 10\eta^2 G^2 L^2 \Delta^2 
    + \frac{L\eta\sigma^2}{N} 
\end{align}
\end{remark}
\begin{remark}
For $T \geq N^3$, by setting the learning rate as $\eta = \frac{\sqrt{N}}{L\sqrt{T}}$, the convergence bound recovers the same $O(\frac{1}{\sqrt{NT}})$ convergence rate of the classical synchronous FL \cite{yu2019parallel}. \end{remark}

Due to the complicated server-transporter-client communication patterns, it is hard to directly analyze the convergence of real sequence in FedEx-Async. Hence, we introduce a \textit{virtual sequence} technique to prove the convergence. 
The \textit{virtual sequence} of the global model is defined in the imaginary case where all client gradients are uploaded to the server immediately after they have been calculated. Similar virtual sequences have been utilized in  \cite{yuan2016convergence, lian2017can, avdiukhin2021federated, nedic2018network}. However, our convergence proof is tailored to the specific problems in our paper and different than all prior works. Specifically, the virtual sequence is defined as
\begin{align}
    v^t = x^0 - \frac{1}{N}\sum_{i=1}^N \sum_{s=0}^{t-1}\eta g^s_i, \forall t.
\end{align}

Clearly, there is a discrepancy between the real sequence and the virtual sequence due to the delayed upload of the client gradients. 
For FedEx-Async, the delay is bounded by
\begin{align}
    (t-1) - \phi_i(t) \leq 2 \Delta_k, \forall i \in \mathcal{R}_k, \forall k.
\end{align}
This can be easily seen from Fig. \ref{equivalent}, %This is because it takes $\Delta_k$ time slots for the transporter to distribute the current global model to the clients and another $\Delta_k$ time slots to collect the computed local gradients and deliver them to the server, %
where the worst case occurs at the first gradient computed by the client. With this delay bound, we can then bound the difference among the virtual global model, the real global model and the real local models on the clients (in the equivalent view) at any time $t$. 
\begin{lemma} \label{lm2}
The difference between the real global model and the virtual global model can be bounded as follows
\begin{align}
    \mathbb{E}\|v^t - x^t\|^2 \leq \frac{4\eta^2 G^2}{N}\sum_{k=1}^K R_{k} \Delta^2_{k}.
     \label{r-v-global-bound}
\end{align} 
The average difference between all clients' local models and the virtual global model is bounded as follows
\begin{align}
    \frac{1}{N}\sum_{i=1}^N \mathbb{E}\|v^t - x^t_i\|^2 \leq \frac{18\eta^2 G^2}{N}
    \sum_{k=1}^K R_{k} \Delta^2_{k}.
\end{align}
\end{lemma}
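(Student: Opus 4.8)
The plan is to prove the two inequalities in order, using the first as the main ingredient for the second. Throughout I work in the aligned client training picture, so that for a client $i\in\mathcal{R}_k$ the round it is currently in starts at a well-defined slot $t_0$ with $t-t_0\le\Delta_k$ and $x^t_i=x^{t_0}-\sum_{s=t_0}^{t-1}\eta g^s_i$, and I use repeatedly that $\mathbb{E}\|g^s_i\|^2\le G^2$ (the bounded gradient assumption $\mathbb{E}\|\nabla F_i(x,\zeta)\|^2\le G^2$, applied via the tower property since $g^s_i$ depends on the random iterate $x^s_i$) together with the elementary bound $\|\sum_{\ell=1}^m a_\ell\|^2\le m\sum_{\ell=1}^m\|a_\ell\|^2$.

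For the first inequality, subtracting the definitions of $x^t$ and $v^t$ telescopes the difference down to exactly the gradients that the virtual sequence has already consumed but the real sequence is still waiting on:
\[
v^t-x^t=-\frac{\eta}{N}\sum_{k=1}^K\sum_{i\in\mathcal{R}_k}\sum_{s=\phi_i(t)+1}^{t-1} g^s_i .
\]
Since $\phi_i(t)$ is common to all $i\in\mathcal{R}_k$, the delay bound $(t-1)-\phi_i(t)\le 2\Delta_k$ gives that the innermost index set has at most $2\Delta_k$ elements, so the total number $M$ of gradient terms appearing is at most $2\sum_k R_k\Delta_k$. Applying $\|\sum a_\ell\|^2\le M\sum\|a_\ell\|^2$, taking expectations, and using $\mathbb{E}\|g^s_i\|^2\le G^2$ yields $\mathbb{E}\|v^t-x^t\|^2\le \eta^2 G^2 M^2/N^2$; finishing with Cauchy--Schwarz, $M^2\le 4\big(\sum_k\sqrt{R_k}\cdot\sqrt{R_k}\,\Delta_k\big)^2\le 4N\sum_k R_k\Delta_k^2$, produces the constant $4$.

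For the second inequality, fix $i\in\mathcal{R}_k$ with round start $t_0$ and split
\[
v^t-x^t_i=\big(v^t-v^{t_0}\big)+\big(v^{t_0}-x^{t_0}\big)+\big(x^{t_0}-x^t_i\big),
\]
then apply $\|a+b+c\|^2\le 3(\|a\|^2+\|b\|^2+\|c\|^2)$. The middle piece is precisely the quantity controlled by the first inequality evaluated at $t_0$, contributing $3\cdot\frac{4\eta^2G^2}{N}\sum_k R_k\Delta_k^2$. The first piece is $v^t-v^{t_0}=-\frac{\eta}{N}\sum_{j=1}^N\sum_{s=t_0}^{t-1}g^s_j$, a sum of $N(t-t_0)\le N\Delta_k$ gradient terms, hence $\mathbb{E}\|v^t-v^{t_0}\|^2\le\eta^2G^2\Delta_k^2$ by the same counting argument; the third piece is $\eta^2\|\sum_{s=t_0}^{t-1}g^s_i\|^2$ with at most $\Delta_k$ terms, so $\mathbb{E}\|x^{t_0}-x^t_i\|^2\le\eta^2G^2\Delta_k^2$. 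Adding up, $\mathbb{E}\|v^t-x^t_i\|^2\le 6\eta^2G^2\Delta_k^2+\frac{12\eta^2G^2}{N}\sum_k R_k\Delta_k^2$; averaging over $i$ and using $\frac1N\sum_{i=1}^N\Delta_{k(i)}^2=\frac1N\sum_k R_k\Delta_k^2$ folds the first term into the second and gives the constant $6+12=18$.

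The conceptual content is front-loaded into the aligned-training reformulation and the delay bound; after that the argument is bookkeeping. The one place where care is needed is the choice of decomposition in the second inequality: the naive split $v^t-x^t_i=(v^t-x^t)+(x^t-x^t_i)$ forces one to control $x^t-x^{t_0}$, whose gradient window straddles whole rounds of other (possibly slower) transporters and generates cross terms of size $\Delta_k+2\Delta_{k'}$ that do not collapse to $\sum_k R_k\Delta_k^2$ with a clean constant. Routing instead through $v^{t_0}$ keeps every ``drift'' window of length at most $\Delta_k$ and isolates the other transporters' contribution inside the already-proven first inequality, which is what recovers the sharp constant $18$.
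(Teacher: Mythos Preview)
Your proof is correct and follows essentially the same approach as the paper: the three-term split $v^t-x^t_i=(v^t-v^{t_0})+(v^{t_0}-x^{t_0})+(x^{t_0}-x^t_i)$ for the second inequality, with the middle term controlled by the first inequality, is exactly the paper's argument (the paper writes $\tau_k$ for your $t_0$), yielding the same per-client bound $6\eta^2G^2\Delta_k^2+\frac{12\eta^2G^2}{N}\sum_k R_k\Delta_k^2$ and hence the constant $18$ after averaging. For the first inequality the paper applies Jensen over the $N$ clients first and then bounds each client's inner sum by $(2\Delta_k)^2G^2$, whereas you lump all gradient terms together and finish with Cauchy--Schwarz on $\sum_k R_k\Delta_k$; both routes are correct and land on the same constant $4$.
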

\begin{proof}
The proof is shown in Appendix [\ref{proof_l2}].
\end{proof}
\begin{theorem}
\label{t2}
For FedEx-Async, by setting the learning rate $0 < \eta \leq 1/L$, we have
\begin{align}
    &\frac{1}{T}\sum_{t=0}^{T-1}\mathbb{E}\|\nabla f(x^t)\|^2  
    \leq \frac{4}{\eta T} (f(x^0) - f^*) 
    + \frac{44 \eta^2 G^2 L^2}{N}\sum_{k=1}^K R_k \Delta^2_k + \frac{2L\eta\sigma^2}{N}.
\end{align}
\end{theorem}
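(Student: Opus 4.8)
The plan is to run a descent-lemma argument on the \emph{virtual sequence} $v^t$ rather than on the real sequence $x^t$, exploiting the fact that $v^t$ obeys the clean one-step recursion $v^{t+1} = v^t - \frac{\eta}{N}\sum_{i=1}^N g^t_i$, and then to transfer the resulting guarantee to $x^t$ via the discrepancy bounds of Lemma~\ref{lm2}.

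First I would apply $L$-smoothness of $f$ (Assumption~1, summed over $i$) to get, for each $t$, $f(v^{t+1}) \le f(v^t) - \tfrac{\eta}{N}\big\langle \nabla f(v^t),\,\textstyle\sum_i g^t_i\big\rangle + \tfrac{L\eta^2}{2N^2}\big\|\textstyle\sum_i g^t_i\big\|^2$. Taking expectation conditioned on the history and using Assumption~2, the inner product becomes $-\eta\langle \nabla f(v^t), \bar g^t\rangle$ with $\bar g^t \triangleq \tfrac1N\sum_i \nabla f_i(x^t_i)$; I would expand it with the polarization identity $-\langle a,b\rangle = \tfrac12\|a-b\|^2 - \tfrac12\|a\|^2 - \tfrac12\|b\|^2$. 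For the second-moment term I would use conditional independence of the stochastic gradients across clients together with Assumption~4 to obtain $\mathbb{E}\|\sum_i g^t_i\|^2 \le N\sigma^2 + N^2\|\bar g^t\|^2$. The $\|\bar g^t\|^2$ contributions then combine into $\tfrac{\eta}{2}(L\eta-1)\|\bar g^t\|^2$, which is $\le 0$ precisely when $\eta \le 1/L$, so that term is discarded.

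It remains to convert what is left into $\|\nabla f(x^t)\|^2$. By $L$-smoothness, $\|\nabla f(v^t) - \bar g^t\|^2 \le \tfrac{L^2}{N}\sum_i\|v^t - x^t_i\|^2$, whose expectation is controlled by the second bound of Lemma~\ref{lm2}; and $\|\nabla f(v^t)\|^2 \ge \tfrac12\|\nabla f(x^t)\|^2 - L^2\|v^t - x^t\|^2$, where $\|v^t-x^t\|^2$ is controlled by the first bound of Lemma~\ref{lm2}. Substituting both yields a per-step inequality $\tfrac{\eta}{4}\mathbb{E}\|\nabla f(x^t)\|^2 \le \mathbb{E} f(v^t) - \mathbb{E} f(v^{t+1}) + \tfrac{11 L^2\eta^3 G^2}{N}\sum_k R_k\Delta_k^2 + \tfrac{L\eta^2\sigma^2}{2N}$. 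I would then sum over $t=0,\dots,T-1$, use $v^0 = x^0$ and $f(v^T) \ge f^*$ to telescope the function-value terms, divide by $\eta T/4$, and collect constants; the factor $44$ is just $4\times 11$.

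I expect the main obstacle to be the bookkeeping in the middle step: the polarization split produces a $\|\bar g^t\|^2$ term that must be absorbed exactly by the quadratic coming from the smoothness bound, which is where $\eta \le 1/L$ enters, and the variance term must be handled with a \emph{conditional} expectation at step $t$ (so that cross-client cross terms vanish) before any further unconditioning. The genuinely hard estimates — bounding the real-versus-virtual and local-versus-virtual model gaps — have already been established in Lemma~\ref{lm2}, so the theorem itself reduces to assembling these pieces carefully.
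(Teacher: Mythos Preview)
Your proposal is correct and follows essentially the same approach as the paper: a descent-lemma argument on the virtual sequence $v^t$, the polarization identity for the inner product, the $\eta\le 1/L$ cancellation of the $\|\bar g^t\|^2$ term, and the two estimates of Lemma~\ref{lm2} to handle $\|v^t-x^t_i\|$ and $\|v^t-x^t\|$. The only cosmetic difference is ordering: the paper first telescopes to obtain a bound on $\tfrac{1}{T}\sum_t\mathbb{E}\|\nabla f(v^t)\|^2$ and then applies $\mathbb{E}\|\nabla f(x^t)\|^2\le 2\mathbb{E}\|\nabla f(v^t)\|^2+2L^2\mathbb{E}\|v^t-x^t\|^2$, whereas you insert the equivalent inequality $\|\nabla f(v^t)\|^2\ge\tfrac12\|\nabla f(x^t)\|^2-L^2\|v^t-x^t\|^2$ into the per-step bound before summing; both routes yield the same constant $44=4\times(9+2)=2\times 18+8$.
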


\begin{proof}
The proof is shown in Appendix [\ref{proof_t2}].
\end{proof}

\begin{remark}
For $T \geq N^3$, by setting the learning rate as $\eta = \frac{\sqrt{N}}{L\sqrt{T}}$, the convergence bound recovers the same $O(\frac{1}{\sqrt{NT}})$ convergence rate of the classical synchronous FL \cite{yu2019parallel}. \end{remark}

\begin{remark}
The convergence bound depends on the RTTs of all transporters, i.e., $\Delta_1, \cdots, \Delta_K$, which further depend on how clients are assigned to the transporters. Specifically, for FedEx-Sync, the bound depends on $\max_k \Delta_k$, and for FedEx-Async, the bound depends on $\sum_{k=1}^K R_k \Delta^2_k$. These bounds shed lights on how to assign clients among the transporters to accelerate learning. 
\end{remark}

\section{Client Assignment and Route Planning } \label{opt_section}
In the previous section, we proved that both versions of FedEx converge for \textit{any} client assignment among the transporters. However, the convergence bound depends on \textit{how} clients are assigned to the transporters. In this section, we study the joint client assignment and route planning problem to optimize the convergence bound of FedEx. 

\subsection{Problem Formulation}
We consider a common scenario where the number of clients greatly exceeds the number of transporters, i.e., $N \gg K$, due to limited transporter availability and the potential deployment of a large number of IoT devices. Let $a_i \in {1, \cdots, K}$ represent the assignment variable for client $i$, indicating which transporter it is assigned to. We also gather the assignment variables for all clients in $\a = (a_1, \cdots, a_N)$. Clearly, $\mathcal{R}_k = {i: a_i = k}$.

With the assigned clients $\mathcal{R}_k$ for each transporter $k$, we can devise a route to minimize the RTT. Let $\Delta_k(\mathcal{R}_k)$ be the minimum RTT given a set of clients $\mathcal{R}_k$. Alternatively, $\Delta_k(\mathcal{R}_k)$ can also be expressed as $\Delta_k(\a)$ since $\mathcal{R}_k$ is determined by $\a$. Likewise, we denote $E_\text{prop}^k(\mathcal{R}_k)$ as the total propulsion energy consumption and $E_\text{trans}^k(\mathcal{R}_k)$ as the total transmission energy consumption, given the assigned set $\mathcal{R}_k$ for transporter $k$.

Client assignment aims to solve the following optimization problems with energy consumption constraints:
\begin{align}
    \text{FedEx-Sync}: &\min_\a \max_k\Delta_k(\a),\\
    \textrm{s.t.}  \quad & E_\text{prop}^k(\mathcal{R}_k) + E_\text{trans}^k(\mathcal{R}_k) \leq E_\text{budget}^k; \forall k 
\end{align}
\begin{align}
    \text{FedEx-Async}: &\min_\a \sum_{k} R_k(\a) \Delta^2_k(\a), \\
    \textrm{s.t.}  \quad & E_\text{prop}^k(\mathcal{R}_k) + E_\text{trans}^k(\mathcal{R}_k) \leq E_\text{budget}^k; \forall k
\end{align}

The above problem is a difficult combinatorial optimization problem. Next, we propose a new algorithm to solve this problem. 

\subsection{Bi-level Optimization}
To solve the client assignment and route planning  problem formulated in the last subsection, we develop a bi-level optimization algorithm, called CARP (short for Client Assignment and Route Planning ). 

\subsubsection{Inner-level optimization} The inner-level optimization is to solve the minimum RTT given a set of client $\mathcal{R}_k$, for each transporter $k$, namely computing $\Delta_k(\a)$ for a given $\a$. Remember that the $\Delta_k(\a)$ is contracted by two main parts. The first part is the total time for model transmission during each transporter's tour, which has been determined once the assignment be formulated. Hence, the second part is the main object we need to minimize in the inner-level optimization. With the pre-determined constant transporter speed $V_k$, the second part is a classical traveling salesman problem (TSP) \cite{lin1973effective, flood1956traveling, junger1995traveling} . One can either solve for the exact solution using dynamic programming or obtain an approximation solution using a heuristic algorithm. Considering the high complexity of dynamic programming (i.e., $O(2^n n^2)$ where $n$ is the number of nodes), we use a heuristic algorithm 2-OPT \cite{croes1958method}, which has a time complexity of $O(n^2)$, to compute $\Delta_k(\a)$ for a given $\a$ in our implementation. Specifically, 2-OPT starts with an initial random route and performs the following operations repeatedly. In each iteration, it randomly selects two edges and checks if switching the their connections can shorten the overall route. The process terminates if no improvement can be made. We run 2-OPT multiple times using different initial routes in order to escape from the local optimum.  

\subsubsection{Outer-level optimization} 
To tackle the optimization problem of determining the optimal client assignment, we propose using Gibbs Sampling. We use a unified cost function $C(\a)$, which equals $\max_k \Delta_k(\a)$ for FedEx-Sync and $\sum_{k} R_k(\a) \Delta^2_k(\a)$ for FedEx-Async. Our CARP algorithm visits each client in a pre-defined sequence, generates a probability distribution for the assignment decision of the current client while holding other clients' assignment decisions unchanged, and samples a new assignment decision according to this distribution. This process is repeated for a sufficient number of iterations to ensure that the assignment converges to the optimal solution with high probability.

In each iteration $l$, we pick a client $i$ according to the pre-defined order. We define the available transporters set that could be selected at iteration $l$ by client $i$ as $S_i^l$. A transporter $k$ is in the available set if it satisfies the constraint:

\begin{align}
E_\text{prop}^k(\mathcal{R}_k \cup {i}) + E_\text{trans}^k(\mathcal{R}_k \cup {i}) \leq E_\text{budget}^k.
\end{align}

Then, we evaluate the objective function $C(a_i, \a^l_{-i})$ by varying client $i$'s $S_i^l$ possible choices while holding the other clients' assignment decisions $\a^l_{-i}$ fixed. We then update client $i$'s decision according to the Gibbs Sampling distribution:

\begin{align}
a^{l+1}_i \gets k, \text{with probability } \frac{\exp(-C(k, \a^l_{-i}/q))}{\sum_{k'=1}^K \exp(-C(k', \a^l_{-i}/q))},
\end{align}

where $q > 0$ is a parameter of the algorithm. The action associated with a lower cost is more likely to be chosen. According to the theory of Gibbs Sampling, the algorithm converges to the optimal solution, i.e., $\arg\min_\a C(\a)$ with probability 1, when $q_l$ is a decreasing sequence over the iterations such that $q_l \to 0$.
\textbf{Complexity}: In each iteration, CARP executes at most $K$ TSP calculations. Thus, for a total of $L$ iterations, the time complexity is $O(LK2^N N^2)$ if dynamic programming is used to solve the TSP, or $O(LK N^2)$ if 2-OPT is used to solve the TSP. However, we can reduce the complexity by saving and reusing old TSP results if the same TSP problem re-appears in a future iteration. 

\section{Experiments}
% In this section, we  evaluate the performance of FedEx using a simulated network environment and standard public datasets. 

\subsection{Learning Setup}
\textbf{DNN Model and Dataset}. We conduct the FL experiments on two public datasets, i.e., FMNIST \cite{xiao2017fashion} and SVHN \cite{netzer2011reading}. For both datasets, we utilize LeNet \cite{lecun1998gradient} as the backbone DNN model, which consists of two convolutional layers and three fully connected layers. 

\textbf{Dataset Splitting}. For the FMNIST experiments, each client possesses 60 training data samples. For the SVHN experiments, each client possesses 800 training data samples. Three data distributions are simulated. 
\begin{itemize}
    \item IID: The clients' data distributions are i.i.d. This is done by uniformly randomly assigning data to the clients.
    \item Non-IID type 1: We use the Dirichlet method to create non-i.i.d. datasets, which is widely applied in FL research, e.g., \cite{chen2020fedbe}. We use a Dirichelet distribution with parameter 0.3 in the FMNIST experiments and 0.5 in the SVHN experiments. 
    \item Non-IID type 2: We also consider a non-i.i.d. distribution that depends on the location of the clients to capture the more realistic situation where clients close to each other generate similar data. In our simulation, clients in the same block receive the same data distribution. Specifically, for each block, we choose a specific label as the main data type (e.g., label 1 for the first block, label 2 for the second block). Then, this type of data is assigned to clients in this block with a probability of 0.7 for FMNIST (0.5 for SVHN), and the rest of the data types are assigned to the clients with a probability of 0.3 for FMNIST (0.5 for SVHN).
\end{itemize}

\subsection{Energy Setup}
\textbf{Network}. We simulate a network without direct communication, consisting of one parameter server and 40 clients distributed over a 2km x 2km area. The entire area is initially divided into 10 equal-sized blocks (i.e., 400 m $\times$ 1000 m), with 4 clients randomly distributed within each block. See Fig. \ref{fig:clients} for an illustration of the server and client deployment. In this scenario, we consider a time slot, defined as the time required for one training round, to be 1 minute.
\begin{figure*}[h]
    \centering
    \begin{minipage}[t]{0.5\linewidth}\includegraphics[width=1\linewidth]{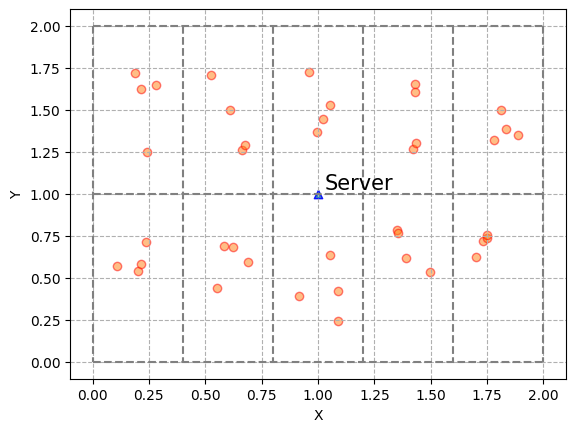}
        \caption{Network Illustration}
	    \label{fig:clients}
    \end{minipage}
\end{figure*}

\textbf{Transporters}. In our main experiments, we simulate $K = 4$ mobile transporters, all moving at the same speed of 36 km/h (10 m/s). We assume that transmissions between transporters and clients occur when the transporters hover 1 km above the clients. The transporters have a transmission bandwidth of 50 Mbps and a transmission power of 20 dBm. The power required for steady straight-and-level flight of the transporters is approximately 30 W, while hovering consumes about 20 W.

\textbf{Routes}. We evaluate different transporter route designs that uses different objective functions to solve the client assignment problem:
\begin{itemize}
    \item Min-Max: $\max_k \Delta_k(\a)$. 
    \item Sum-of-Weighted-Squared (SWS):  $\sum_k R_k(\a)\Delta^2_k(\a)$.
    \item Shortest-Total: $\sum_k \Delta_k(\a)$.
\end{itemize}

\textbf{Energy Consumption and Budget}. We consider energy consumption to include both communication and propulsion energy. For communication energy, although we use a relatively small backbone model in the learning setup, we assume a transmission model size of 100 MB, which is more realistic for modern model sizes (e.g., ResNet34 \cite{he2016deep}, MobileNet \cite{howard2017mobilenets}). Without loss of generality, we assume that each transporter has the same energy budget. In the main experiments, we set the energy budget as 15 KJ for each transporter. We further investigate the effect of lower energy budgets in the following experiments.

\begin{figure*}[h]
    \centering
    \begin{minipage}[t]{0.4\linewidth}\includegraphics[width=1\linewidth]{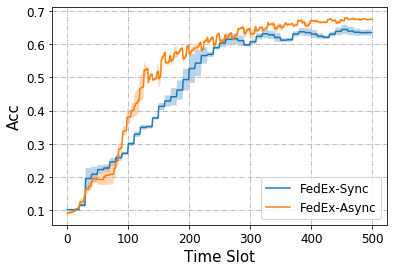}
        \caption{FedEx under i.i.d data on FMNIST}
	    \label{fig:fmnist_iid}
    \end{minipage}
    \begin{minipage}[t]{0.4\linewidth}
        \includegraphics[width=1\linewidth]{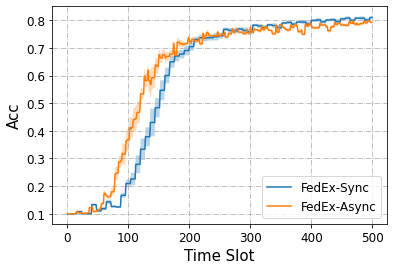}
        \caption{FedEx under i.i.d data on SVHN}
	    \label{fig:svhn_iid}
    \end{minipage}   
\end{figure*}

\subsection{Results}
All results are averaged over four independent experiments. 

% \begin{figure*}[tt]
%     \centering
%     \begin{minipage}[t]{0.45\linewidth}
%         \includegraphics[width=1\linewidth]{figures/IID_SVHN.png}
%         \caption{FedEx under i.i.d data on SVHN}
% 	    \label{fig:svhn_iid}
%     \end{minipage}   
%        \begin{minipage}[t]{0.45\linewidth}
%         \includegraphics[width=1\linewidth]{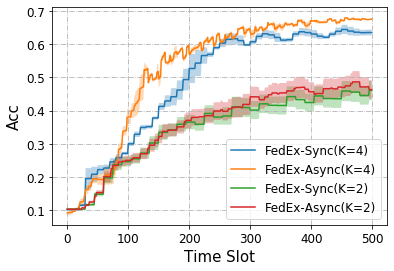}
%         \caption{Impact of Number of Transporters}
% 	    \label{fig: different-collectors}
%     \end{minipage}
%     %\vspace{-5 pt}
% \end{figure*}

\textbf{FedEx-Sync v.s. FedEx-Async under i.i.d. data}. Initially, we evaluate the performance of FedEx-Sync (utilizing Min-Max routes) and FedEx-Async (employing SWS routes) under i.i.d. data conditions. The convergence curves for FMNIST and SVHN are depicted in Fig. \ref{fig:fmnist_iid} and Fig. \ref{fig:svhn_iid}, respectively. It is evident that both algorithms converge under i.i.d. data; however, FedEx-Async surpasses FedEx-Sync in performance. This outcome is reasonable, considering FedEx-Sync's additional time spent waiting for the slowest mobile transporters.

\textbf{FedEx-Sync v.s. FedEx-Async under non-i.i.d. data}. Then we contrast FedEx-Sync and FedEx-Async under two distinct non-i.i.d. data categories. Fig. \ref{fig:fmnist_type1} and Fig. \ref{fig:svhn_type1} indicate that FedEx-Sync's performance is comparable to FedEx-Async's under type-1 non-i.i.d. data. Moreover, Fig. \ref{fig:fmnist_type2} and Fig. \ref{fig:svhn_type2} reveal that FedEx-Sync even surpasses FedEx-Async under type-2 non-i.i.d. data. The decline in FedEx-Async's learning performance is attributed to the client data of a single transporter not representing the overall data distribution in non-i.i.d. settings. Consequently, the transporter's asynchronous learning rounds cause the global model—derived from the partial and \textit{biased} local model update—to deviate from the optimal convergence path, slowing down convergence. Although FedEx-Async saves waiting time compared to FedEx-Sync, additional learning rounds may be required to rectify the deviation, leading to slower convergence for FedEx-Async.

The relative performance of FedEx-Sync and FedEx-Async is heavily influenced by the specific non-i.i.d. data distribution, particularly the degree of bias in a transporter's overall client data. This clarifies why type-2 non-i.i.d. data presents a greater challenge for FedEx-Async than type-1 non-i.i.d. data: under type-1 non-i.i.d. data, even if individual clients' data is highly non-i.i.d., the aggregate client data of a transporter may still resemble the full data distribution when there are enough clients on the transporter's route. However, as clients on the same transporter's route are typically physically close, the overall client data of a transporter remains highly non-i.i.d. in the type-2 case. These experimental findings provide valuable guidance for selecting the appropriate FedEx version in various application scenarios.

\begin{figure*}[h]
    \centering
    \begin{minipage}[t]{0.4\linewidth}
        \includegraphics[width=1\linewidth]{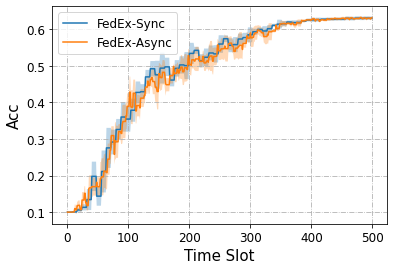}
        \caption{FedEx under type-1 non-i.i.d on FMNIST}
	    \label{fig:fmnist_type1}
    \end{minipage}   
    \begin{minipage}[t]{0.4\linewidth}
        \includegraphics[width=1\linewidth]{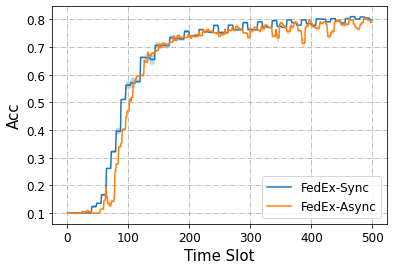}
        \caption{FedEx under type-1 non-i.i.d on SVHN}
	    \label{fig:svhn_type1}
    \end{minipage}
\end{figure*}

\begin{figure*}[h]
    \centering
    \begin{minipage}[t]{0.4\linewidth}
        \includegraphics[width=1\linewidth]{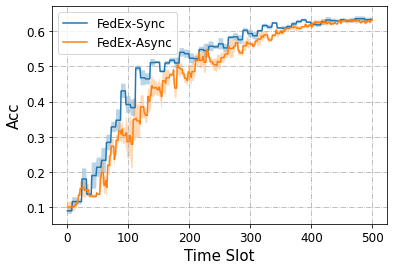}
        \caption{FedEx under type-2 non-i.i.d on FMNIST}
	    \label{fig:fmnist_type2}
    \end{minipage}    
    \begin{minipage}[t]{0.4\linewidth}
        \includegraphics[width=1\linewidth]{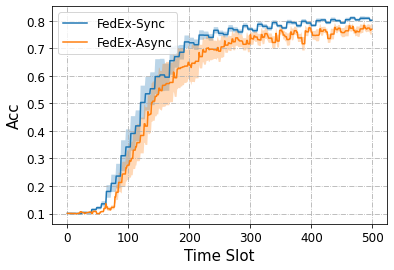}
        \caption{FedEx under type-2 non-i.i.d on SVHN}
	    \label{fig:svhn_type2}
    \end{minipage}   
    %\vspace{-5 pt}
\end{figure*}

\textbf{Impact of Routes}. In this series of experiments, we explore the effects of various transporter route designs empirically. To minimize the influence of non-i.i.d. data, we focus on the i.i.d. data setting, as FedEx's performance can be highly sensitive to the specific routes chosen and the data distribution under non-i.i.d. conditions. Fig. \ref{fig:asyn-path} compares the convergence of FedEx-Async with different route plannings. The outcome aligns with our theoretical analysis in Theorem 1, where the SWS design delivers the best convergence performance. Fig. \ref{fig:syn-path} presents the results for FedEx-Sync. In this scenario, the longest individual RTT in SWS is the same as that in Min-Max, making SWS an alternative solution to Min-Max. As a result, both Min-Max and SWS routes achieve comparable convergence performance, outpacing Shortest-Total, which is consistent with the theoretical analysis in Theorem 2.

\begin{figure*}
    \centering
    \begin{minipage}[t]{0.4\linewidth}
        \includegraphics[width=1\linewidth]{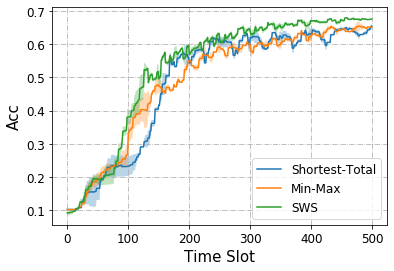}
        \caption{Impact of Routes (FedEx-Async)}
	    \label{fig:asyn-path}
    \end{minipage}
    \begin{minipage}[t]{0.4\linewidth}
        \includegraphics[width=1\linewidth]{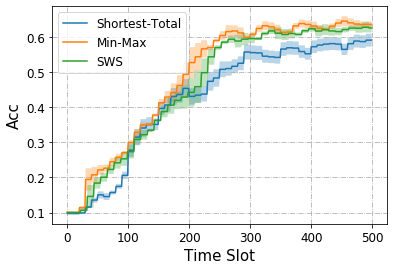}
        \caption{Impact of Routes (FedEx-Sync)}
	    \label{fig:syn-path}
    \end{minipage}    
\end{figure*}

\textbf{Impact of Number of Transporters}. We also evaluate FedEx using different numbers of transporters and present the results in Fig. \ref{fig: different-collectors}. The results show that both FedEx-Sync and FedEx-Async are able to converge for $K = 2$ and $K = 4$. However, increasing the number of transporters results in a faster convergence speed. This can be attributed to two factors. Firstly, with fewer transporters, each transporter has to cover more clients, which increases the time required for disseminating global models and collecting local models. Secondly, with more clients assigned to each transporter, the transmission time is significantly longer, leading to higher RTT for each transporter. Furthermore, FedEx-Async has a smaller advantage over FedEx-Sync when using fewer transporters. This is because the level of asynchronicity is much lower with fewer transporters. In fact, in this experiment, the two transporters have similar RTTs, and hence, FedEx-Async performs similarly to FedEx-Sync.

\begin{figure*}
    \centering
    \begin{minipage}[t]{0.4\linewidth}
        \includegraphics[width=1\linewidth]{figures/different_collectors.png}
        \caption{Impact of Number of Transporters}
	    \label{fig: different-collectors}
    \end{minipage}
    \begin{minipage}[t]{0.4\linewidth}
        \includegraphics[width=1\linewidth]{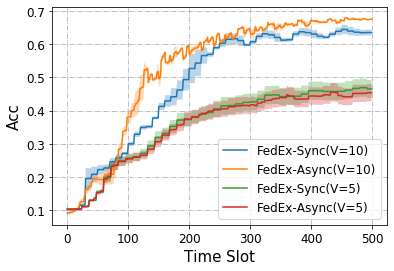}
        \caption{Impact of Moving Speed}
	    \label{fig: different-speeds}
    \end{minipage}    
\end{figure*}

\textbf{Impact of Moving Speed}. We then test FedEx with different transporters' moving speeds. The result can be found in Fig. \ref{fig: different-speeds}. With moving speed $V_k = 5~m/s, \forall k$, both FedEx-Async and FedEx-Sync can still converge. Compared with the scenario, in which each transporter has the moving speed $V_k = 10~m/s$, convergence speeds become slower, as each transporter's RTT increases.

\textbf{Impact of Energy Budget}. We also investigate the effect of different energy budgets on the convergence of FedEx-Sync and FedEx-Async. In the main experiments, we set the energy budget for each transporter as 15 KJ. Here, we reduce the energy budget to 12 KJ and observe the convergence in Fig. \ref{fig: budget_convergence}. Compared with the results with a higher energy budget, the convergence speed of FedEx-Async decreases, while the convergence speed of FedEx-Sync remains similar. This is because with the lower energy budget, the cost function of FedEx-Async fails to reach the optimal point, while the cost function of FedEx-Sync still reaches the optimal point. We plot the energy consumption of each transporter under two different energy budgets of FedEx-Async in Fig. \ref{fig: Energy_budget}. It can be seen that under a low energy budget, the energy consumption of each transporter becomes more similar, indicating that each transporter should take a similar workload. This further causes the cost function to only reach the sub-optimal point.

\begin{figure*}
    \centering
    \begin{minipage}[t]{0.4\linewidth}
        \includegraphics[width=1\linewidth]{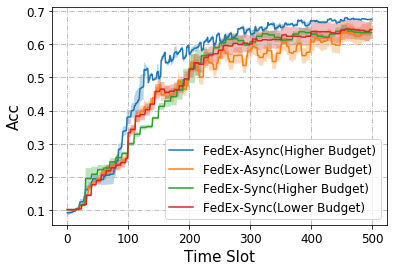}
        \caption{Impact of Energy Budget}
	    \label{fig: budget_convergence}
    \end{minipage}
    \begin{minipage}[t]{0.4\linewidth}
        \includegraphics[width=1\linewidth]{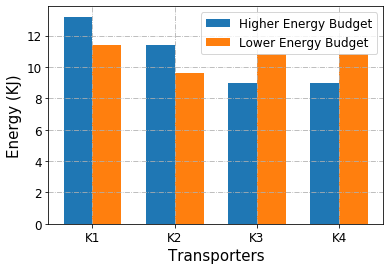}
        \caption{Energy Consumption of each transporters}
	    \label{fig: Energy_budget}
    \end{minipage}    
\end{figure*}

\section{Conclusion}
In this paper, we have presented a novel FL framework named FedEx that utilizes mobile transporters for global model dissemination and local model collection in distributed machine learning scenarios where a communication infrastructure is absent. We have considered various transporters' moving speeds and energy constraints and proposed two new algorithms for efficient client assignment and route planning. We have also conducted a rigorous convergence analysis of these algorithms under non-convex loss functions and non-i.i.d. data distributions. The results provide a systematic approach to solve the joint client assignment and route planning  problem. The performance and convergence of FedEx have been evaluated in numerical experiments on publicly available datasets. Overall, our proposed framework shows promising potential for FL in resource-constrained settings.

\bibliographystyle{IEEEtran}
\bibliography{reference}

\appendices
\section{Proof of Lemma \ref{syn-difference}}
\label{proof_l1}
Consider $t'$ be the last time when client $i$ receives the global model, the we have
\begin{align}
    &\mathbb{E}\left[\|x^t - x^t_i\|^2\right]
    =  \mathbb{E}\left[\|x^t - x^{t'} + x^{t'} - x^t_i\|^2\right]\nonumber\\
    \leq & 2 \mathbb{E}\left[\|x^t - x^{t'}\|^2\right] + 2  \mathbb{E}\left[\|x^{t}_i - x^{t'}_i\|^2\right]\nonumber\\
    \leq & 2\mathbb{E}\left[\left\|\frac{1}{N}\sum_{j=1}^N \sum_{s= t'-1-\Delta}^{t-1-\Delta} \eta g^s_j\right\|^2\right] 
     + 2\mathbb{E}\left[\left\| \sum_{s= t'-1}^{t-1} \eta g^s_i\right\|^2\right]\nonumber\\
    \leq & \frac{2\eta^2}{N}(t -t')\sum_{j=1}^N \sum_{s= t'-1-\Delta}^{t-1-\Delta} \mathbb{E}\left[\|g_j^s\|^2\right]
    + 2(t-t')\eta^2 \sum_{s= t'-1}^{t-1}\mathbb{E}\left[\|g_i^s\|^2\right]\nonumber\\
    \leq & \frac{2\eta^2}{N}(t -t')\sum_{j=1}^N \sum_{s= t'-1-\Delta}^{t-1-\Delta} G^2
     + 2(t-t')\eta^2 \sum_{s= t'-1}^{t-1} G^2\nonumber\\
    \leq & 4 \eta^2G^2\Delta^2
\end{align}

\section{Proof of Theorem \ref{Th_1}}
\label{proof_t1}
Although the averaged global model $x^t$ is not observable at $t \neq a\Delta, a = 0,1,2, \dots$. Here we can still utilize $x^t$ for analysis. 
For $t > \Delta$, by the smoothness of $f(x)$, we have

\begin{align}\label{sf_bound}
    \mathbb{E}[\|f(x^{t+1})\|] &\leq \mathbb{E}[\|f(x^t)\|] + \mathbb{E}[\langle \nabla f(x^t), x^{t+1} - x^t\rangle] + \frac{L}{2} \mathbb{E}[\|x^{t+1} - x^t\|^2]
\end{align}
The second term on the right-hand side of \eqref{sf_bound} can be expressed as follows,
\begin{align}
    &\mathbb{E}[\langle \nabla f(x^t), x^{t+1} - x^t\rangle]\nonumber \\
    =& -\eta \mathbb{E}[\langle \nabla f(x^t), \frac{1}{N}\sum_{i=1}^N g^{t-\Delta}_i\rangle]\nonumber \\ 
    =& -\eta \mathbb{E}[\langle \nabla f(x^t), \frac{1}{N}\sum_{i=1}^N \nabla f_i(x^{t-\Delta}_i)\rangle]\nonumber\\
    =& -\frac{\eta}{2}\mathbb{E}\left[\|\nabla f(x^t)\|^2 + \|\frac{1}{N}\sum_{i=1}^N \nabla f_i(x^{t-\Delta}_i)\|^2\right] + \frac{\eta}{2}\mathbb{E}
    \|\nabla f(x^t) - \frac{1}{N}\sum_{i=1}^N \nabla f_i(x^{t-\Delta}_i)\|^2\nonumber\\
    =&-\frac{\eta}{2}\mathbb{E}\left[\|\nabla f(x^t)\|^2\right] -\frac{\eta}{2}\mathbb{E}\left[ \|\frac{1}{N}\sum_{i=1}^N \nabla f_i(x^{t-\Delta}_i)\|^2\right] +\frac{\eta}{2}\mathbb{E}\left[ \|\nabla f(x^t) - \frac{1}{N}\sum_{i=1}^N \nabla f_i(x^{t-\Delta}_i)\|^2\right]
\end{align}

The third term on the right-hand side of \eqref{sf_bound} can be bounded as follows,
\begin{align}
    &\frac{L}{2} \mathbb{E}[\|x^{t+1} - x^t\|^2] = \frac{L\eta^2}{2}\mathbb{E}\left[\|\frac{1}{N}\sum_{i=1}^N g^{t-\Delta}_i\|^2\right]\nonumber\\
    = & \frac{L\eta^2}{2} \mathbb{E}\left[\|\frac{1}{N}\sum_{i=1}^N (g^{t-\Delta}_i - \nabla f_i(x^{t-\Delta}_i))\|^2 \right] +  \frac{L\eta^2}{2}\mathbb{E}\left[\|\frac{1}{N}\sum_{i=1}^N  \nabla f_i(x^{t-\Delta}_i)\|^2 \right]\nonumber\\
    =&\frac{L\eta^2}{2N^2}\sum_{i=1}^N \mathbb{E}\left[\|g^{t-\Delta}_i - \nabla f_i(x^{t-\Delta}_i)\|^2\right]  + \frac{L\eta^2}{2}\mathbb{E}\left[\|\frac{1}{N}\sum_{i=1}^N  \nabla f_i(x^{t-\Delta}_i)\|^2 \right]\nonumber\\
    \leq& \frac{L\eta^2\sigma^2}{2N} + \frac{L\eta^2}{2}\mathbb{E}\left[\|\frac{1}{N}\sum_{i=1}^N  \nabla f_i(x^{t-\Delta}_i)\|^2 \right]
\end{align}

Substituting these into \eqref{sf_bound} yields
\begin{align}
    &\mathbb{E}[\|f(x^{t+1}\|] \nonumber\\
    \leq& \mathbb{E}[\|f(x^t)\|]-\frac{\eta}{2}\mathbb{E}\left[\|\nabla f(x^t)\|^2\right] -\frac{\eta - L\eta^2}{2}\mathbb{E}\left[ \|\frac{1}{N}\sum_{i=1}^N \nabla f_i(x^{t-\Delta}_i)\|^2\right] \nonumber\\
    &+\frac{\eta}{2}\mathbb{E}\left[ \|\nabla f(x^t) - \frac{1}{N}\sum_{i=1}^N \nabla f_i(x^{t-\Delta}_i)\|^2\right] +\frac{L\eta^2\sigma^2}{2N}\nonumber\\
    \leq& \mathbb{E}[\|f(x^t)\|]-\frac{\eta}{2}\mathbb{E}\left[\|\nabla f(x^t)\|^2\right] + \frac{\eta}{2}\mathbb{E}\left[ \|\nabla f(x^t) - \frac{1}{N}\sum_{i=1}^N \nabla f_i(x^{t-\Delta}_i)\|^2\right] +\frac{L\eta^2\sigma^2}{2N}\nonumber\\
    \leq& \mathbb{E}[\|f(x^t)\|]-\frac{\eta}{2}\mathbb{E}\left[\|\nabla f(x^t)\|^2\right] +\frac{\eta}{2N}\sum_{i=1}^N\mathbb{E}\left[\|\nabla f_i(x^t) - \nabla f_i(x^{t-\Delta}_i)\|^2\right] +\frac{L\eta^2\sigma^2}{2N}\nonumber\\
    \leq& \mathbb{E}[\|f(x^t)\|]-\frac{\eta}{2}\mathbb{E}\left[\|\nabla f(x^t)\|^2\right] +\frac{\eta L^2}{2N}\sum_{i=1}^N\mathbb{E}\left[\|x^t - x^{t-\Delta}_i\|^2\right] +\frac{L\eta^2\sigma^2}{2N} \nonumber\\
   \leq& \mathbb{E}[\|f(x^t)\|]-\frac{\eta}{2}\mathbb{E}\left[\|\nabla f(x^t)\|^2\right] +\frac{\eta L^2}{2N}\sum_{i=1}^N\mathbb{E}\left[\|x^t - x^t_i + x^t_i - x^{t-\Delta}_i\|^2\right] +\frac{L\eta^2\sigma^2}{2N} \nonumber\\
    \leq& \mathbb{E}[\|f(x^t)\|]-\frac{\eta}{2}\mathbb{E}\left[\|\nabla f(x^t)\|^2\right] +\frac{L\eta^2\sigma^2}{2N} +\frac{\eta L^2}{2N}\sum_{i=1}^N(2\mathbb{E}\left[\|x^t - x^t_i\|^2\right] + 2 \mathbb{E}\left[\|x^t_i - x^{t-\Delta}_i\|^2\right])\nonumber \\
    \leq& \mathbb{E}[\|f(x^t)\|]-\frac{\eta}{2}\mathbb{E}\left[\|\nabla f(x^t)\|^2\right] +\frac{L\eta^2\sigma^2}{2N} + \frac{\eta L^2}{2N}\sum_{i=1}^N (8\eta^2 G^2 \Delta^2 + 2\eta^2 G^2 \Delta^2)\nonumber\\
     \leq& \mathbb{E}[\|f(x^t)\|]-\frac{\eta}{2}\mathbb{E}\left[\|\nabla f(x^t)\|^2\right] + 5\eta^3 G^2 L^2 \Delta^2 +\frac{L\eta^2\sigma^2}{2N}
\end{align}

Dividing both sides by $\frac{\eta}{2}$ and rearranging the terms, we have
\begin{align}
    \mathbb{E}\left[\|\nabla f(x^t)\|^2\right] &\leq \frac{2}{\eta}\left(\mathbb{E}[\|f(x^t)\|] - \mathbb{E}[\|f(x^{t+1}\|] \right)  + 10\eta^2 G^2 L^2 \Delta^2 + \frac{L\eta\sigma^2}{N}
\end{align}
Note that $x_t$ is unchanged during $t \in \left[0, \Delta\right]$. Taking the sum over $t = 0, 1, ..., T-1$ and dividing both sides by $T$, we have
\begin{align}
    &\frac{1}{T}\sum_{t=0}^{T-1} \mathbb{E}\left[\|\nabla f(x^t)\|^2\right]
    \leq  \frac{2}{\eta T}\left(\mathbb{E}[\|f(x^0)\|] - \mathbb{E}[\|f(x^T\|] \right) \nonumber\\
    &+ \frac{T-\Delta}{T}10\eta^2 G^2 L^2 \Delta^2 + \frac{T-\Delta}{T}\frac{L\eta\sigma^2}{N} + \frac{\Delta}{T}\mathbb{E}\left[\|\nabla f(x^0)\|^2\right]\nonumber\\
     \leq&  \frac{2}{\eta T}\left(f(x^0) - f* \right) 
    + \frac{T-\Delta}{T}10\eta^2 G^2 L^2 \Delta^2 + \frac{T-\Delta}{T}\frac{L\eta\sigma^2}{N} + \frac{\Delta}{T} \|\nabla f(x^0)\|^2
\end{align}
where $f^* = \min_x f(x)$. 
This completes the proof.

\section{Proof of Lemma \ref{lm2}}
\label{proof_l2}
At any time $t$, using the definition of the virtual and real sequences, we have
\begin{align}
    \mathbb{E}\|v^t - x^t\|^2 
    =&\mathbb{E}\|\frac{1}{N}\sum_{i=1}^N \sum_{s= \phi_i(t)}^{t-1}\eta g^s_i\|^2 
    \leq  \frac{\eta^2}{N}\sum_{i=1}^N \mathbb{E}\|\sum_{s=\phi_i(t)}^{t-1} g^s_i\|^2 \nonumber\\
    \leq & \frac{\eta^2}{N}\sum_{i=1}^N((t-1) - \phi_i(t))^2 G^2 \leq \frac{4\eta^2 G^2}{N}\sum_{k=1}^K R_{k} \Delta^2_{k}.
\end{align}

At any time $t$, denote $\tau_k$ as the global model version that clients in $\mathcal{R}_k$ use to train the local models. Consider client $i \in \mathcal{R}_k$, we have
\begin{align}
    &\mathbb{E}\|v^t - x^t_i\|^2 =\mathbb{E}\|v^t - v^{\tau_k} + v^{\tau_k} - x^{\tau_k} + x^{\tau_k} - x^t_i\|^2 \nonumber\\
    \leq & 3\mathbb{E}\left[\|v^t - v^{\tau_k}\|^2 + \|v^{\tau_k} - x^{\tau_k}\|^2 + \|x^{\tau_k} - x^t_i\|^2 \right].
\end{align}
The first term on the right-hand side can be bounded by
\begin{align}
    &\mathbb{E}\|v^t - v^{\tau_k}\|^2 =\mathbb{E}\|\frac{1}{N}\sum_{j=1}^N\sum_{s = \tau_k}^{t-1}\eta g^s_j\|^2 
    \leq  \frac{\eta^2}{N}\sum_{j=1}^N \mathbb{E}\|\sum_{s=\tau_k}^{t-1} g^s_j\|^2
    \leq  \eta^2 G^2 \Delta^2_k.
\end{align}
The second term can be bounded according to \eqref{r-v-global-bound}. The third term can be bounded by
\begin{align}
    &\mathbb{E}\|x^{\tau_k} - x^t_i\|^2 = \mathbb\|\sum_{s=\tau_k}^{t-1} \eta g^s_i\|^2 \leq \eta^2 G^2 \Delta^2_k.
\end{align}
Summing up all three bounds, we have
\begin{align}
    \mathbb{E}\|v^t - x^t_i\|^2 \leq 6\eta^2 G^2 \Delta^2_k + \frac{12 \eta^2 G^2}{N}\sum_{k'=1}^K R_{k'} \Delta^2_{k'}.
\end{align}
Therefore, the average difference is bounded by
\begin{align}
    \frac{1}{N}\sum_{i=1}^N \mathbb{E}\|v^t - x^t_i\|^2 
    \leq& \frac{6\eta^2 G^2}{N} 
    \sum_{k'=1}^K R_{k'} \Delta^2_{k'} + \frac{12\eta^2 G^2}{N} \sum_{k'=1}^K R_{k'} \Delta^2_{k'} = \frac{18\eta^2 G^2}{N}
    \sum_{k'=1}^K R_{k'} \Delta^2_{k'}.
\end{align}

\section{Proof of Theorem \ref{t2}}
\label{proof_t2}
We first analyze the convergence of the virtual global model sequence. By the smoothness of $f(x)$ in Assumption 1, we have
\begin{align}
    \mathbb{E}\|f(v^{t+1)}\| 
    \leq \mathbb{E}\|f(v^t)\| + \mathbb{E}\langle \nabla f(v^t), v^{t+1} - v^t\rangle 
    + \frac{L}{2}\mathbb{E}\|v^{t+1} - v^t\|^2.
    \label{one-step}
\end{align}
The second term on the right-hand side of \eqref{one-step} can be expressed as follows.
\begin{align}
    &\mathbb{E}\langle \nabla f(v^t), v^{t+1} - v^t\rangle \nonumber\\
    =&-\eta \mathbb{E}\langle \nabla f(v^t), \frac{1}{N}\sum_{i=1}^N \nabla f_i(x^t_i)\rangle \nonumber\\
    =& -\frac{\eta}{2}\mathbb{E}\left[\|f(v^t)\|^2 + \|\frac{1}{N}\sum_{i=1}^N \nabla f_i(x^t_i)\|^2\right] +\frac{\eta}{2}\mathbb{E}\|f(v^t) - \frac{1}{N}\sum_{i=1}^N \nabla f_i(x^t_i)\|^2.
\end{align}
The third term on the right-hand side of \eqref{one-step} can be bounded as follows
\begin{align}
    &\frac{L}{2}\mathbb{E}\|v^{t+1} - v^t\|^2 = \frac{L\eta^2}{2}\mathbb{E}\|\frac{1}{N}\sum_{i=1}^N g^t_i\|^2 \nonumber\\
    = & \frac{L\eta^2}{2}\mathbb{E}\|\frac{1}{N}\sum_{i=1}^N (g^t_i - \nabla f_i(x^t_i)) + \frac{1}{N}\sum_{i=1}^N \nabla f_i(x^t_i)\|^2 \nonumber\\
    \leq & \frac{L\eta^2\sigma^2}{2N} + \frac{L\eta^2}{2}\mathbb{E}\|\frac{1}{N}\sum_{i=1}^N \nabla f_i(x^t_i)\|^2.
\end{align}
Substituting these into \eqref{one-step} yields
\begin{align}
    \mathbb{E}\|f(v^{t+1}\| 
    \leq &\mathbb{E}\|f(v^t)\| - \frac{\eta}{2}\mathbb{E}\|\nabla f(v^t)\|^2 \nonumber - \frac{\eta - L\eta^2}{2}\mathbb{E}\|\frac{1}{N}\sum_{i=1}^N \nabla f_i(x^t_i)\|^2 \nonumber\\
    & + \frac{\eta}{2}\mathbb{E}\|f(v^t) - \frac{1}{N}\sum_{i=1}^N \nabla f_i(x^t_i)\|^2 + \frac{L\eta^2\sigma^2}{2N} \nonumber \\
    \leq &\mathbb{E}\|f(v^t)\| - \frac{\eta}{2}\mathbb{E}\|\nabla f(v^t)\|^2 + \frac{\eta}{2}\mathbb{E}\|f(v^t) - \frac{1}{N}\sum_{i=1}^N \nabla f_i(x^t_i)\|^2 + \frac{L\eta^2\sigma^2}{2N} \nonumber \\
    \leq &\mathbb{E}\|f(v^t)\| - \frac{\eta}{2}\mathbb{E}\|\nabla f(v^t)\|^2 + \frac{\eta}{2N}\sum_{i=1}^N\mathbb{E}\|f(v^t) - \nabla f_i(x^t_i)\|^2 + \frac{L\eta^2\sigma^2}{2N} \nonumber \\ 
    \leq &\mathbb{E}\|f(v^t)\| - \frac{\eta}{2}\mathbb{E}\|\nabla f(v^t)\|^2 + \frac{\eta L^2}{2N}\sum_{i=1}^N\mathbb{E}\|v^t - x^t_i\|^2 + \frac{L\eta^2\sigma^2}{2N} \nonumber \\
    \leq &\mathbb{E}\|f(v^t)\| - \frac{\eta}{2}\mathbb{E}\|\nabla f(v^t)\|^2 + \frac{9 \eta^3 G^2 L^2}{N}\sum_{k=1}^K R_k \Delta^2_k + \frac{L\eta^2\sigma^2}{2N}.
\end{align}
Dividing both sides by $\frac{\eta}{2}$ and rearranging the terms, we have
\begin{align}
    \mathbb{E}\|\nabla f(v^t)\|^2 \leq & \frac{2}{\eta}\left(\mathbb{E}\|f(v^t)\| - \mathbb{E}\|f(v^{t+1}\|\right)  + \frac{18 \eta^2 G^2 L^2}{N}\sum_{k=1}^K R_k \Delta^2_k + \frac{L\eta\sigma^2}{N}.
\end{align}
Taking the sum over $t = 0, 1, \cdots, T-1$ and dividing both sides by $T$, we have
\begin{align}
    \frac{1}{T}\sum_{t=0}^{T-1}\mathbb{E}\|\nabla f(v^t)\|^2 &\leq  \frac{2}{\eta T}\left(\mathbb{E}\|f(v^0)\| - \mathbb{E} \|f(v^T)\|\right) 
    + \frac{18 \eta^2 G^2 L^2}{N}\sum_{k=1}^K R_k \Delta^2_k + \frac{L\eta\sigma^2}{N} \nonumber \\
    &\leq \frac{2}{\eta T} (f(x^0) - f^*) + \frac{18 \eta^2 G^2 L^2}{N}\sum_{k=1}^K R_k \Delta^2_k + \frac{L\eta\sigma^2}{N},
    \label{virtual-bound}
\end{align}
where $f^* = \min_x f(x)$. Now, for the real sequence, we have
\begin{align}
    \frac{1}{T}\sum_{t=0}^{T-1}\mathbb{E}\|\nabla f(x^t)\|^2 
    \leq & \frac{1}{T}\sum_{t=0}^{T-1}(2\mathbb{E}\|\nabla f(v^t)\|^2 + 2\mathbb{E}\|\nabla f(v^t) - \nabla f(x^t)\|^2) \nonumber\\
    \leq & \frac{1}{T}\sum_{t=0}^{T-1}(2\mathbb{E}\|\nabla f(v^t)\|^2 + 2L^2\mathbb{E}\|v^t - x^t\|^2) \nonumber\\
    \leq &\frac{2}{T}\sum_{t=0}^{T-1}\mathbb{E}\|\nabla f(v^t)\|^2 + \frac{8\eta^2 G^2 L^2}{N} \sum_{k=1}^K R_k \Delta^2_k.
\end{align}
Substituting the bound on the virtual sequence into the above equation yields the claimed bound on the real sequence. 

\end{document}